\setlist[itemize]{label=--}
\setlist[enumerate]{label=(\arabic*),labelindent=\parindent,leftmargin=*}
\newcommand{\BucketALG}{\textsf{Bucketing}\xspace}
\newcommand{\BucketALGBypass}{\textsf{BucketingBypass}\xspace}
\newcommand{\OPT}{\textsf{OPT}\xspace}
\newcommand{\DET}{\textsf{DET}\xspace}
\newcommand{\OPTtau}{$\textsf{OPT}_{\tau}$\xspace}
\newcommand{\sigtau}{$\sigma_{\tau}$\xspace}
\newcommand{\convexpath}[2]{
[   
    create hullnodes/.code={
        \global\edef\namelist{#1}
        \foreach [count=\counter] \nodename in \namelist {
            \global\edef\numberofnodes{\counter}
            \node at (\nodename) [draw=none,name=hullnode\counter] {};
        }
        \node at (hullnode\numberofnodes) [name=hullnode0,draw=none] {};
        \pgfmathtruncatemacro\lastnumber{\numberofnodes+1}
        \node at (hullnode1) [name=hullnode\lastnumber,draw=none] {};
    },
    create hullnodes
]
($(hullnode1)!#2!-90:(hullnode0)$)
\foreach [
    evaluate=\currentnode as \previousnode using \currentnode-1,
    evaluate=\currentnode as \nextnode using \currentnode+1
    ] \currentnode in {1,...,\numberofnodes} {
  let
    \p1 = ($(hullnode\currentnode)!#2!-90:(hullnode\previousnode)$),
    \p2 = ($(hullnode\currentnode)!#2!90:(hullnode\nextnode)$),
    \p3 = ($(\p1) - (hullnode\currentnode)$),
    \n1 = {atan2(\y3,\x3)},
    \p4 = ($(\p2) - (hullnode\currentnode)$),
    \n2 = {atan2(\y4,\x4)},
    \n{delta} = {-Mod(\n1-\n2,360)}
  in 
    {-- (\p1) arc[start angle=\n1, delta angle=\n{delta}, radius=#2] -- (\p2)}
}
-- cycle
}
\newtheorem{theorem}{Theorem}
\newtheorem{definition}[theorem]{Definition}
\newtheorem{lemma}[theorem]{Lemma}
\title{Dependency-Aware Online Caching}
\def\BibTeX{{\rm B\kern-.05em{\sc i\kern-.025em b}\kern-.08em
    T\kern-.1667em\lower.7ex\hbox{E}\kern-.125emX}}
\begin{document}

\author[1]{Julien Dallot}
\author[2]{Amirmehdi Jafari Fesharaki}
\author[1]{Maciej Pacut}
\author[1]{Stefan Schmid}
\affil[1]{TU Berlin, Germany}
\affil[2]{Sharif University of Technology, Iran}
\date{}

  \maketitle
\thispagestyle{empty}
\setcounter{page}{0}

\begin{abstract}
        We consider a variant of the online caching problem where the items exhibit \emph{dependencies} among each other: an item can reside in the cache only if all its dependent items are also in the cache.
        The dependency relations can form any directed acyclic graph.
		These requirements arise e.g., in systems such as CacheFlow (SOSR 2016) that cache forwarding rules for packet classification in IP-based communication networks.

	First, we present an optimal randomized online caching algorithm which accounts for dependencies among the items.
	Our randomized algorithm is $O( \log k)$-competitive, where $k$ is the size of the cache, meaning that our algorithm never incurs the cost of $O(\log k)$ times higher than even an optimal algorithm that knows the future input sequence.

	Second, we consider the bypassing model, where requests can be served at a fixed price without fetching the item and its dependencies into the cache --- a variant of caching with dependencies introduced by Bienkowski et al.\ at SPAA 2017.
	For this setting, we give an~$O( \sqrt{k \cdot \log k})$-competitive algorithm, which significantly improves the best known competitiveness.
	We conduct a small case study, to find out that our algorithm incurs on average 2x lower cost.
\end{abstract}

\section{Introduction}

Performance of most computer systems today rely on how well the caching algorithms manage their caches to avoid cache misses.
Existing caching algorithms treat cached items as independent, however in some applications, items exhibit \emph{dependencies} (modeled with a directed acyclic graph) among each other: an item can reside in the cache only if all its dependent items are also in the cache.

For example, dependency-aware caching has applications in communication networks, in IP packet classification~\cite{GuptaM01} in network routers and switches (see Section~\ref{ssec:application}), where the goal is to cache the set of heavy hitter rules.
We elaborate on the setting later in this section.
In this paper, we present an algorithm with provable performance guaranties, thus proposing a rigorous, carefully analyzed alternative to the CacheFlow system~\cite{KattaARW16}.

\pagebreak
Designing caching algorithms poses interesting challenges that can be overcome with a principled approach.
In particular, as shifts in the traffic patterns (and hence also in the heavy hitter rules) may not be predictable, which calls for algorithms that operate in an online manner and adjust the cache in reaction to the traffic they see.
A well-established method to deal with uncertainty of the future is the framework of online algorithms and competitive analysis~\cite{Borodin1998}, and the classic caching algorithms were often designed and analyzed for this setting~\cite[Ch. 3, 4]{Borodin1998} ~\cite[Ch. 3]{online-book-fiat}.
Ideally, these online algorithms achieve a good competitive ratio: intuitively, without knowing the future demand, their performance is almost as good as a clairvoyant optimal offline algorithm.

To design algorithms that efficiently manage a cache in presence of dependencies,
we generalize the well-known caching problem~\cite{Sleator1985} to respect dependencies.
The dependencies can form an arbitrary directed acyclic graph, see Figure~\ref{fig:dag} for an illustration.
The objective of the algorithm is to minimize the number of fetches.

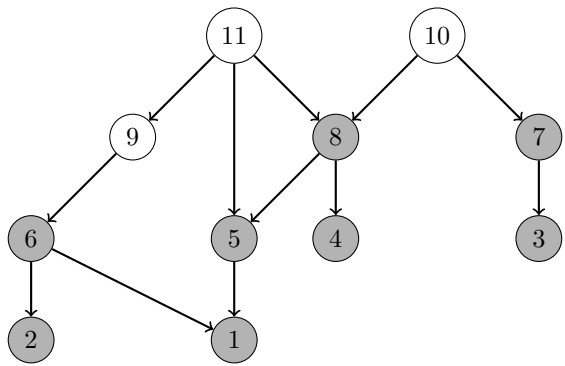
\begin{figure}[h!]
\centering

\begin{tikzpicture}[transform shape, scale=0.9]
\def\distanceBetweenNodes{1.5cm}

\tikzstyle{CachedNode}=[fill=black!30]

\node[draw, circle] at (-1*\distanceBetweenNodes,0*\distanceBetweenNodes) (node1) {$11$};
\node[draw, circle] at (1*\distanceBetweenNodes,0*\distanceBetweenNodes) (node2) {$10$};
\node[draw, circle] at (-2*\distanceBetweenNodes,-1*\distanceBetweenNodes) (node3) {$9$};
\node[draw, circle, CachedNode] at (0*\distanceBetweenNodes,-1*\distanceBetweenNodes) (node4) {$8$};
\node[draw, circle, CachedNode] at (2*\distanceBetweenNodes,-1*\distanceBetweenNodes) (node5) {$7$};
\node[draw, circle, CachedNode] at (-3*\distanceBetweenNodes,-2*\distanceBetweenNodes) (node6) {$6$};
\node[draw, circle, CachedNode] at (-1*\distanceBetweenNodes,-2*\distanceBetweenNodes) (node7) {$5$};
\node[draw, circle, CachedNode] at (0*\distanceBetweenNodes,-2*\distanceBetweenNodes) (node8) {$4$};
\node[draw, circle, CachedNode] at (2*\distanceBetweenNodes,-2*\distanceBetweenNodes) (node9) {$3$};
\node[draw, circle, CachedNode] at (-3*\distanceBetweenNodes,-3*\distanceBetweenNodes) (node10) {$2$};
\node[draw, circle, CachedNode] at (-1*\distanceBetweenNodes,-3*\distanceBetweenNodes) (node11) {$1$};

\draw[->, thick] (node1) -- (node3) {};
\draw[->, thick] (node1) -- (node4) {};
\draw[->, thick] (node1) -- (node7) {};
\draw[->, thick] (node2) -- (node4) {};
\draw[->, thick] (node2) -- (node5) {};
\draw[->, thick] (node3) -- (node6) {};
\draw[->, thick] (node4) -- (node8) {};
\draw[->, thick] (node4) -- (node7) {};
\draw[->, thick] (node5) -- (node9) {};
\draw[->, thick] (node6) -- (node10) {};
\draw[->, thick] (node6) -- (node11) {};
\draw[->, thick] (node7) -- (node11) {};

\end{tikzpicture}

\caption{Example directed acyclic graph of dependencies among items.
  An arrow from $u$ to $v$ means that if $u$ is in the cache then also $v$ must be in the cache.
  The size of the cache is $k = 8$, the items $9$, $10$ and $11$ are not in the cache while the grayed items $1,2, \ldots, 8$ are in the cache.
  If a request to $9$ would arrive, we must choose what item to evict while keeping a feasible cache ($8$ or $7$ are the only choices here) before we fetch $9$.
}

\label{fig:dag}
\end{figure}

Introducing dependencies among the cached items unravels algorithm design challenges unseen in the classic caching problem.
Consider the situation when a cache miss occurs, and we need to choose a single item to evict.
Then, we cannot freely pick any item from the cache to evict, as other items in the cache may depend on them.
Hence, simply using a classic dependency-unaware algorithm such as Random Mark~\cite{FiatKLMSY91} leads to infeasible solutions, as other items may depend on the item chosen to be evicted.
Furthermore, how to decide which item to evict? If we have two items that could be evicted, should we base the decision upon the number of descendants of these items? If we favor to evict an item with more descendants, then we would have more flexibility to choose items for future evictions.
These questions become even more challenging when items can have multiple parents, and the dependency graph is not a tree.

The main contribution of our paper is an optimal randomized online \emph{Bucketing} for the dependency-aware caching.
Our analysis shows that our generalized algorithm is $O(\log k)$-competitive, where $k$ is the size of the cache.
We dive deeper into the analysis of our randomized algorithm with a parameterized analysis of the competitive ratio, revealing how the topology of the dependency graph influences the resulting competitive ratio, linking its competitiveness to the maximum independent set in the dependency graph.

As an additional contribution, we generalize our randomized algorithm to the setting with bypassing~\cite{EpsteinILN15,Irani02a}, where requests can be served for a fixed price without fetching to the cache.
In the context of packet classification, bypassing delegates classification of a packet to the centralized controller, similarly to the design introduced in CacheFlow~\cite{KattaARW16}.
This variant of caching with dependencies and bypassing was introduced by Bienkowski et al.\ at SPAA 2017, where the authors gave an $O(k \cdot h(T))$-competitive \cite{BienkowskiMPSS17} where $h(T)$ is the height of the dependency graph $T$.
By generalizing our Bucketing procedure, we design an algorithm that is $O( \sqrt{k \cdot \log k})$-competitive.
Our result hence significantly improves the best known competitiveness, reaching the sublinear dependency on the cache size~$k$.

\subsection{Motivation: Packet Classification}
\label{ssec:application}

Packet classification is a~fundamental task in communication networks~\cite{GuptaM01}.
For example, each packet incoming to a router is matched against a set of predefined rules (e.g., does the packet match an IP range?) to determine how the router should handle the packet (e.g., send via a port);
a packet may match multiple rules, and among them, the router uses the highest priority rule.
The router then handles the packet according to the action associated with the rule.

The number of rules a router needs to store is growing rapidly for several reasons~\cite{CittadiniMUBFM10}.
This introduces significant memory requirements, requiring more and more expensive and power-hungry hardware~\cite{SarrarUFSH12}. 
To address this problem, a~natural approach, studied in systems such as CacheFlow~\cite{KattaARW16} and TreeCaching by Bienkowski et al.\ at SPAA 2017~\cite{BienkowskiMPSS17}, is to store only a small subset of the rules at the router and the rest 
in a cheaper but potentially slower memory, e.g., at a (software-defined) network controller, essentially a two-level caching hierarchy.
However, we cannot blindly use existing caching algorithms due to dependencies that arise among the rules with overlapping patterns.
Precisely, to assure that the router correctly forwards packets with its cached subset of rules, if two rules overlap, the higher priority rule must be present in the cache when the lower priority rule is present.

The structure of the dependency graph depends on how general the packet forwarding rules are. In case of prefix rule matching for a single field (destination IP), the dependency graph is a~tree~\cite{BienkowskiMPSS17}. However, in multi-field matching, commonly used in OpenFlow~\cite{McKeownABPPRST08}, the dependency can have a more general form of a DAG, even if the IP fields are matched by the longest prefix rule. Also, even for single-field matching, the wildcard rule can result in DAG dependencies~\cite{KattaARW16}.

For more background and details on such network architectures in general and on the technical setup of caching classification rules, we refer to prior works~\cite{KattaARW16, RottenstreichKJ22, BienkowskiMPSS17, AddankiPPRSV22}.

\subsection{Related Work}

Due to their wide applications, algorithms for caching (often also called paging) were studied for decades, and here we overview the results from the perspective of competitive analysis.
The seminal paper of Sleator and Tarjan~\cite{Sleator1985} originated the concept of competitive analysis of online algorithms.
In their paper, an upper bound of $k$-competitiveness was established for a family of marking algorithms that included commonly studied Least Recently Used and FIFO algorithms, and it was shown that no deterministic algorithm can be better than $k$-competitive.
Randomization helps: Fiat et al.~\cite{FiatKLMSY91} showed that an algorithm Random Mark is $2H_k$-competitive and no randomized algorithm can be better than $H_k$-competitive. Two algorithms~\cite{McGeochS91,AchlioptasCN00} match this lower bound, hence competitiveness of online caching is fully understood in this model.

Our paper is the most related to work of Bienkowski et al.~\cite{BienkowskiMPSS17}, where they introduce \emph{online tree caching}, a caching variant that can be viewed as online dependency-aware caching with the dependency graphs restricted to binary trees. In their model the requests can be \emph{bypassed}~\cite{EpsteinILN15,Irani02a}:
the request for item not present in the cache can be served from the slow memory, incurring a fixed cost.
The algorithm presented in their paper attains the competitive ratio of $O(k\cdot h(T))$, where~$h(T)$ is the height of the dependency tree.

In the context of packet classification in communication networks, an algorithm CacheFlow~\cite{KattaARW16} was proposed to deal with dependencies among the rules.
The algorithm splits the packet classification rules to minimize overlap, and uses the estimated past rule popularity statistics to determine the best cache configuration.
Empirical evaluations of CacheFlow demonstrate the applicability of the approach in the context of packet classification.
Other worth-mentioning attempts to improve caching forwarding rules tried to avoid involving the controller by using cooperation between switches~\cite{RottenstreichKJ22} or optimizing rules storage with dynamic compression algorithms \cite{FIBWithoutChurn, FIBSkiRental} for more restricted scenarios.

Some existing work on caching uses similar terminology to dependencies, but the models differ substantially from ours.
First, dependency-aware caching was studied from the practical perspective in the context of parallel processing systems~\cite{YuZWZL22}, but the dependencies are required only at the fetch time, and can immediately be evicted afterward.
Second, online caching was considered under a restriction called access graph~\cite{BorodinIRS95}, a request at time $t$ can be followed only by a subset of requests at $t+1$, consistently with a given access graph. In their work, the cache state is unrestricted, and in contrast, in our model we restrict feasible cache configurations rather than feasible input sequences.

\subsection{Contributions}

  First, we consider the dependency-aware caching problem in the most natural setting \emph{without bypassing}, where after requesting an item, it must be placed in the cache along with its dependencies.
  For this setting, we develop optimal deterministic and randomized algorithms.
  The optimal deterministic algorithm, Recursive LRU is $k$-competitive, and the randomized algorithm Bucketing is $2 H_k$-competitive, where $H_k$ is the $k$-th Harmonic number.
  Further, we characterize how the competitive ratio depends on the topology of the dependency graph.
Our randomized algorithm is $2 H_{\min\{k, \ell\}}$-competitive, where $\ell$ is the size of the maximum independent set in the transitive closure of the dependency graph, $k$ is the size of the cache, and $H_k$ is the $k$-th Harmonic number.

We complement this result by showing that no randomized algorithm can be better than $H_{\min\{k, \ell\}}$-competitive.
Hence, our algorithm is asymptotically optimal from the competitive standpoint.
We highlight that the lower bound holds even for the simplest dependency structure of the tree, hence the algorithm is optimal for the simplest case of prefix rule matching.
 
  Next, we consider a related variant of dependency-aware caching where requests can be \emph{bypassed}~\cite{EpsteinILN15,Irani02a}, meaning that for a fixed cost an algorithm could avoid potentially costly fetch of the requested item and its dependencies. 
This setting was already studied in the literature in the context of caching packet classification rules~\cite{GuptaM01}: Bienkowski et al.~\cite{BienkowskiMPSS17} proposed an $O(k\cdot h(T))$-competitive algorithm, where $h(T)$ is the height of the dependency tree (their algorithm restrict dependencies to trees). We significantly improve upon this result by developing a $(6 \sqrt{k \cdot H_{\min\{k,\ell\}}})$-competitive algorithm, a ratio that is independent of the height of the dependency graph.
We note that in contrast to this result, the competitive ratio of our algorithm can only improve when we account for properties of the dependency graph.

Finally, we perform an empirical case study comparing the TreeCaching algorithm of Bienkowski et al.~\cite{BienkowskiMPSS17} with our randomized algorithm, finding that it on average performs 2x better in terms of the average cost per request.

\section{Organization}

We explain in greater details the problem we solve and introduce notations used throughout the whole paper in section \ref{sec:preliminaries}.
We present our main randomized algorithm and prove the guaranties on its competitiveness in section \ref{sec:main_section_without_bypassing} --- we also present there a deterministic algorithm and prove the optimality of both algorithms with lower bounds.
We present a randomized algorithm in the framework where bypassing is authorized in section \ref{sec:bypassing}.
Finally, we provide empirical evidence that our randomized algorithm with bypassing authorized outperforms the best known online algorithm against Zipfian and Exponential distributions.

\section{Preliminaries}
\label{sec:preliminaries}

\subsection{Model}
\label{sec:model}
We introduce the \emph{online dependency-aware caching} problem, defined as follows.
Our task is to manage a two level memory hierarchy, consisting of a slow memory which stores the universe $\mathcal{U}$ of $n$ items, and a fast memory which stores at most $k$ items, where $k \ge 1$ is a parameter. 

At the beginning we are given a set of dependencies among the items, given as an arbitrary directed acyclic graph (DAG) $G = (\mathcal{U}, E)$, which restricts the set of feasible caches.
For each item $x$, the set of its \emph{dependencies} are the items reachable from $x$ in $G$, excluding the item $x$ itself.
At any time, an item can be present in the cache only if all its dependencies are in the cache.
We assume that each item has at most $k-1$ dependencies.
If $G$ has no edges, the problem is equivalent to the classic online caching problem~\cite{SleatorT85}.

In the online manner, we receive a~sequence $\sigma$ of requests to the items.
If a requested item is not in the cache, we must \emph{fetch} this item into the cache alongside with any of its dependencies that may not be in the cache.
As the size of the cache is limited, we may need to evict other items to fetch the requested item and its dependencies.
The goal of the algorithm is to minimize the number of fetches.

\subsection{Notation}
\label{ssec:notation}

Let $x$ and $y$ be two items of the universe $\mathcal{U}$.
We say that $y$ is a \emph{descendant} of $x$ (or $x$ is an \emph{ancestor} of $y$) if there exists a directed path from $x$ to $y$ in $G$.
By $T(x)$ we denote the set of all descendants of $x$.
In other words, $T(x)$ is the set that contains $x$ and all its dependencies.

For the rest of the paper, we fix a total order $\tau$ among the items such that $\tau$ is consistent with an arbitrary reversed topological order of the items in $G$.
For example, the numbers in the items of Figure \ref{fig:dag} are a reversed topological order and the order $\le$ on the item's associated numbers is a~valid order $\tau$ between the items.

For any directed acyclic graph $D$, we define its maximum independent set as follows. First, we take the transitive closure $D^T$ of $D$.
Then, we take the symmetric closure $D^S$ of~$D^T$.
We say that the maximum independent set of $D$ is the maximum independent set of~$D^S$.

\section{Dependency-Aware Caching without Bypassing}
\label{sec:main_section_without_bypassing}

In this section, we study the problem of online caching with dependencies in the randomized setting.
We present a randomized algorithm \BucketALG and prove that its expected competitive ratio is always below $2H_{\min \{k, \ell\}}$.
In Appendix~\ref{apx:lb} we show an asymptotically tight lower bound which proves \BucketALG is the best possible.

\subsection{The Randomized Algorithm \BucketALG}

In this section we present our randomized algorithm called \BucketALG.
The design of the algorithm relies on the notion of \emph{bucket} define hereafter.

\begin{definition}
A bucket is a subset of items with the following properties:
\begin{itemize}
  \item it is not empty
  \item all its items are cached
  \item evicting its item with maximum $\tau$ (called maximum item) leaves a feasible cache
\end{itemize}
\end{definition}

\noindent We say that a bucket is \emph{frozen} if it used to be a bucket but does not match the definition anymore.\\

Existence of a bucket ensures that one of its items --- the one with maximum $\tau$ --- is ready for eviction.
Hence, our algorithm \BucketALG uses buckets as its only eviction interface; its main tasks then are to maintain a pool of buckets and choose a bucket when an eviction is needed. 
Here follows a~description of the algorithm.

\BucketALG operates in \emph{phases}.
The first phase starts with the first request $\sigma_{1}$.
At the start of a~phase, \BucketALG generates a new pool of buckets.
To do so, it iterates over each maximal cached item $x$ (i.e., cached item without cached ancestor) and inserts $T(x)$ as a bucket in the pool (see an illustration in Figure \ref{fig:buckets}).
\BucketALG now has a non-empty pool of buckets.

\begin{figure}[h!]
\centering

\begin{tikzpicture}[transform shape, scale=1]
\def\distanceBetweenNodes{1.5cm}

\tikzstyle{CachedNode}=[fill=black!30]

\node[draw, circle] at (-1*\distanceBetweenNodes,0*\distanceBetweenNodes) (node1) {$11$};
\node[draw, circle] at (1*\distanceBetweenNodes,0*\distanceBetweenNodes) (node2) {$10$};
\node[draw, circle] at (-2*\distanceBetweenNodes,-1*\distanceBetweenNodes) (node3) {$9$};
\node[draw, circle, CachedNode] at (0*\distanceBetweenNodes,-1*\distanceBetweenNodes) (node4) {$8$};
\node[draw, circle, CachedNode] at (2*\distanceBetweenNodes,-1*\distanceBetweenNodes) (node5) {$7$};
\node[draw, circle, CachedNode] at (-3*\distanceBetweenNodes,-2*\distanceBetweenNodes) (node6) {$6$};
\node[draw, circle, CachedNode] at (-1*\distanceBetweenNodes,-2*\distanceBetweenNodes) (node7) {$5$};
\node[draw, circle, CachedNode] at (0*\distanceBetweenNodes,-2*\distanceBetweenNodes) (node8) {$4$};
\node[draw, circle, CachedNode] at (2*\distanceBetweenNodes,-2*\distanceBetweenNodes) (node9) {$3$};
\node[draw, circle, CachedNode] at (-3*\distanceBetweenNodes,-3*\distanceBetweenNodes) (node10) {$2$};
\node[draw, circle, CachedNode] at (-1*\distanceBetweenNodes,-3*\distanceBetweenNodes) (node11) {$1$};

\draw[->, thick] (node1) -- (node3) {};
\draw[->, thick] (node1) -- (node4) {};
\draw[->, thick] (node1) -- (node7) {};
\draw[->, thick] (node2) -- (node4) {};
\draw[->, thick] (node2) -- (node5) {};
\draw[->, thick] (node3) -- (node6) {};
\draw[->, thick] (node4) -- (node8) {};
\draw[->, thick] (node4) -- (node7) {};
\draw[->, thick] (node5) -- (node9) {};
\draw[->, thick] (node6) -- (node10) {};
\draw[->, thick] (node6) -- (node11) {};
\draw[->, thick] (node7) -- (node11) {};

\begin{pgfonlayer}{background}
	\fill[fill=orange!80, pattern=north west lines, opacity=0.3] \convexpath{node4, node8, node11, node7}{13pt};
	\fill[fill=orange!80, pattern=north west lines, opacity=0.3] \convexpath{node6, node11, node10, node6}{13pt};
	\fill[fill=orange!80, pattern=north west lines,  opacity=0.3] \convexpath{node5, node9}{13pt};
\end{pgfonlayer}

\end{tikzpicture}
\caption{Example of buckets formed at the beginning of a phase.
We have three buckets, depicted by dashed sets, with maximum items (i.e., candidates for eviction) $6$, $8$ and $7$. The item $1$ is in two buckets.
}

\label{fig:buckets}
\end{figure}
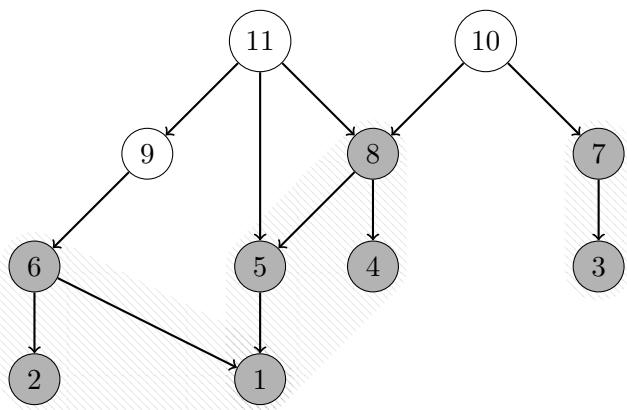

Whenever a request arrives, \BucketALG starts by removing the requested item and its dependencies from all buckets.
To satisfy the request, multiple items might be missing: in order to fetch them, \BucketALG considers the missing items one by one in increasing $\tau$.
When it considers a missing item to fetch, the cache may however be full.
In that case, \BucketALG chooses one bucket uniformly at random in the pool and evicts its maximum item out of the cache (and also remove the evicted item from all buckets).
\BucketALG then fetches the considered item.

At some point, it may happen that some subset of items in our pool no longer constitute a~bucket.
The reason may be that the subset has become empty, or that its maximum item cannot be evicted due to dependency relations.
In that situation, we say that the bucket is \emph{frozen}.
Whatever the reason that caused it, if a bucket froze then \BucketALG immediately removes that bucket from the pool.
Finally, a~new phase begins when the pool becomes empty.

Above we gave an almost complete presentation of \BucketALG.
There are however still edge cases that need to be properly addressed for correctness, for instance what to do if a new phase starts in the middle of serving a request.
For a complete description we refer to the pseudo-code in algorithm \ref{alg:random_mark_no_bypassing}.

\SetKwFor{When}{when}{do}{endwhen}
\SetKwProg{Proc}{procedure}{:}{}
\SetKwProg{Alg}{algorithm}{:}{}

\begin{algorithm}[h]
 \caption{Randomized algorithm \BucketALG}
 \label{alg:random_mark_no_bypassing}

\Proc{{ResetBuckets()}}{

	$\Gamma := \emptyset$

	\ForEach{cached item $x$ with no cached ancestor}
	{
		$\Gamma := \Gamma \cup \{ T(x) \}$
	}
	return $\Gamma$\\
}
 \Comment{Fetch $w$, evict an item beforehand if necessary}

 \Proc{{EvictAndFetch($w$, $\Gamma$)}}
 {
   \uIf {$w$ not in the cache}
   {
	 \uIf {the cache is full}
	 {
	   choose $B \in \Gamma$ uniformly at random, let $y \in B$ with maximum $\tau$\\
	   Remove $y$ from all buckets\\
	   Remove all frozen buckets from $\Gamma$\\
	   Evict $y$ from the cache\\
	 }
	 Fetch $w$ into the cache\\
   }
   return $\Gamma$\\
 }

 \Comment{The Main Algorithm}

 \Alg{{\BucketALG}}{
   $\Gamma \gets \text{ResetBuckets}()$\\

\When{a new request $v$ arrives}
{
	\For{$w$ in $T(v)$ ordered in increasing $\tau$}
	{
	  Remove the items of $T(v)$ from all buckets\\
      Remove all frozen buckets from $\Gamma$\\
                \uIf{$\Gamma = \emptyset$}
                {
                        $\Gamma \gets \text{ResetBuckets}()$\\
                        Remove the items of $T(v)$ from all buckets\\
                        Remove all frozen buckets from $\Gamma$\\
                }
		$\Gamma \gets \text{EvictAndFetch}(w, \Gamma)$\\
	}
	Serve the request to $v$\\
}
}
\end{algorithm}

If $G$ has no edges, our problem is equivalent to classic caching.
In that case, it is worth noting that our algorithm \BucketALG is then equivalent to the well-known, asymptotically optimal Random Mark algorithm by Fiat et al.~\cite{FiatKLMSY91}.
In a sense, a bucket in our algorithm directly generalizes an unmarked nodes in the random-mark algorithm.

\subsection{Proof for the competitiveness of \BucketALG}

In this section, we prove that \BucketALG holds an expected competitive ratio of $H_{\min\{k, \ell\}}$ against the oblivious adversary.\\

\subsubsection{Upper Bound for the \BucketALG Algorithm}
\label{sssec:ub_bucketalg}

In this subsection, we derive an upper bound on the expected cost (number of fetches) of \BucketALG during any phase.
Fix any input sequence $\sigma$ and directed acyclic graph $G$.
In the next paragraph we make our analysis framework more precise. 

In order to synchronize our analysis with the actions of \BucketALG, we will use a more fine-grained yet equivalent request sequence than $\sigma$.
We call this sequence $\sigma_{\tau}$ and construct it as follows.
Iterate over the requests of $\sigma$ and replace each $x$ by $T(x)$ sorted in increasing $\tau$.
We refer to the items of $\sigma_{\tau}$ as \emph{pseudo-requests}.
As \BucketALG considers the items to fetch in the same order as in~$\sigma_{\tau}$, the following correspondence holds: to any instant when \BucketALG considers a given item $w$ corresponds a unique pseudo-request $w \in \sigma_{\tau}$.
We can therefore without ambiguity refer to specific actions that \BucketALG performed after receiving a given pseudo-request $w \in \sigma_{\tau}$, such as an eviction or a fetch.

We partition the sequence $\sigma_{\tau}$ into subsequences separated by bucket regeneration events (as defined above).
We call those subsequences \emph{phases} and will upper-bound the cost of \BucketALG on each of them.

Let $P$ be a phase, and let $m$ be the number of buckets that \BucketALG generated at the beginning of $P$.
We further break $P$ into subsequences called \textit{fragments}.
The fragments make up a partition of~$P$ so that the number of buckets in the pool stays the same during each fragment.
In other words, the fragments are separated by freeze events.
If two or more buckets freeze at the same time (i.e., due to the same pseudo-request), we order these freeze events arbitrarily.
Hence, we can without ambiguity name each fragment after the number of buckets that are already frozen as it begins: we call the fragments $F_{0}, F_{1} \dots F_{m-1}$.

Let $i \in [0, m-1]$ and $b \in [1, m]$, we define $X_{i}^{b}$ as the random variable equal to the number of items that were evicted during $F_{i}$ in the $b$-th bucket to freeze.
$X_{i} = \sum_{b=1}^{m} X_{i}^{b}$ is the total number of evicted items during $F_{i}$.
The following lemma then holds.

\begin{lemma}
	\label{lem:basic_upper_bound_without_bypassing}
	$\forall i \in [0, m-1]$, $\forall b \in [i+1, m]$ it holds that
        \begin{align*}
		\mathbb{E}[X_{i}^{b}] = \frac{\mathbb{E}[X_{i}]}{m-i}
        \end{align*}
\end{lemma}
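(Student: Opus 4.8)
The plan is to show that, conditioned on the entire history up to the start of fragment $F_i$ (in particular on the set of items still present in the cache and on which $i$ buckets have already frozen), the remaining $m-i$ unfrozen buckets are exchangeable with respect to the random choices \BucketALG makes during $F_i$. Since $X_i^b$ counts evictions charged to the $b$-th bucket to freeze, and $b$ ranges over $[i+1,m]$ precisely because the first $i$ buckets froze before $F_i$, exchangeability will force $\mathbb{E}[X_i^b]$ to be the same for every $b \in [i+1,m]$; summing over these $m-i$ values of $b$ gives $\mathbb{E}[X_i] = (m-i)\,\mathbb{E}[X_i^b]$, which is the claim.

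First I would set up the conditioning carefully. The index of a bucket ``to freeze'' is only determined in hindsight, so I would instead argue about the $m-i$ buckets that are alive at the beginning of $F_i$, labelled by their identities rather than by freeze order. The key structural observation is that during $F_i$ the only randomness \BucketALG uses is: each time the cache is full and an item must be fetched, pick a bucket uniformly at random among those currently in $\Gamma$ and evict its maximum item. Crucially, the pseudo-request sequence $\sigma_\tau$ is fixed (it does not depend on the algorithm's coins), and the deterministic bookkeeping — removing $T(v)$ from all buckets, detecting freezes, ending the fragment — treats all currently-alive buckets symmetrically: whether a bucket freezes depends only on which of its items have been removed (by requests or by evictions), not on its label. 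So I would prove, by induction on the pseudo-requests within $F_i$, that the joint distribution of (which buckets are alive, how many items each alive bucket has lost to eviction) is invariant under any permutation of the $m-i$ buckets alive at the start of $F_i$ — as long as we also permute the starting configuration, which is legitimate because we are computing a conditional expectation given that configuration and then will appeal to symmetry of the statement.

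The cleanest way to make the exchangeability rigorous is a coupling / relabelling argument: fix the starting configuration of $F_i$ and a permutation $\pi$ of the alive buckets; couple the run of \BucketALG with a run on the $\pi$-relabelled buckets by mapping each uniform bucket choice through $\pi$. Since $\sigma_\tau$ is identical in both runs and all deterministic steps commute with relabelling, the two runs stay in $\pi$-correspondence until $F_i$ ends, and in particular the $j$-th bucket to freeze in one run is the $\pi$-image of the $j$-th bucket to freeze in the other, with the same eviction count. This shows $X_i^{b}$ and $X_i^{b'}$ have the same distribution for all $b,b' \in [i+1,m]$ — note that within $F_i$ exactly $m-i$ buckets are still candidates to be ``the $b$-th to freeze'' for $b \ge i+1$, which is why the range is $[i+1,m]$ and why we divide by $m-i$. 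Then $\mathbb{E}[X_i] = \sum_{b=i+1}^{m}\mathbb{E}[X_i^b] = (m-i)\,\mathbb{E}[X_i^b]$.

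The main obstacle I anticipate is precisely the mismatch between labelling buckets by their freeze order (as $X_i^b$ does) and reasoning about them by identity: one must be careful that conditioning on ``$i$ buckets have frozen so far'' does not break the symmetry among the survivors, e.g. by leaking information that makes some survivor more eviction-prone. This is handled by observing that the event ``these particular $i$ buckets froze and in this order, leaving this cache configuration'' is itself symmetric in the $m-i$ survivors, so conditioning on it preserves their exchangeability; the relabelling coupling above then only needs to permute survivors, never the already-frozen ones. A secondary subtlety is the interaction with phase boundaries — if the pool empties mid-request a new phase starts — but this cannot occur strictly inside a fragment $F_i$ with $i < m$ (the pool is non-empty throughout $F_i$ by definition of a fragment), so it does not affect the argument.
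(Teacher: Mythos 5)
Your argument hinges on the claim that the $m-i$ buckets alive at the start of $F_i$ are exchangeable, justified by a relabelling coupling in which each uniform bucket choice is mapped through a permutation $\pi$ and ``all deterministic steps commute with relabelling.'' This is where the proposal breaks. Buckets are not abstract labels: they are concrete sets of items of different sizes and contents, and every deterministic step of \BucketALG depends on those contents --- which items the fixed pseudo-request sequence removes from which buckets, whether a bucket has become empty, and whether its maximum-$\tau$ item can still be evicted feasibly. Already at the first eviction the two coupled runs differ physically: one evicts the maximum item of bucket $B_j$, the other the maximum item of $B_{\pi(j)}$; these are different items, so the cache states, the remaining bucket contents, the freeze times and even the length of the fragment diverge. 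There is therefore no reason that the $j$-th bucket to freeze in one run is the $\pi$-image of the $j$-th bucket to freeze in the other with the same eviction count, and the coupling does not establish that $X_i^{b}$ and $X_i^{b'}$ are identically distributed. The subtlety you yourself flag --- that $b$ indexes freeze order, which is decided in hindsight and is correlated with the eviction choices (an eviction can empty the chosen bucket and thereby force it to be the very next one to freeze) --- is exactly the correlation that destroys the symmetry you assume; conditioning on the configuration at the start of $F_i$ does not make heterogeneous buckets interchangeable.

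The paper's proof takes a more elementary, local route and does not attempt a global exchangeability statement. It introduces per-pseudo-request indicators $X_i^{b}(t)$ together with the event $\mathcal{E}(t)$ that the $t$-th pseudo-request of $F_i$ triggers an eviction, and argues that, conditioned on $\mathcal{E}(t)=1$, the single uniform draw among the $m-i$ alive buckets lands in the $b$-th-to-freeze bucket with probability $1/(m-i)$; hence $\mathbb{E}[X_i^{b}(t)]=\mathbb{P}[\mathcal{E}(t)=1]/(m-i)$ is independent of $b$, and summing over $t$ and then over the alive ranks $b\in[i+1,m]$ (using $X_i^b=0$ for frozen $b\le i$) yields the lemma. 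Note that even in that argument the delicate point is precisely the identification of ``uniform over the alive buckets'' with ``uniform over the hindsight freeze ranks $i+1,\dots,m$''; any repaired version of your write-up would have to confront that per-eviction conditional statement directly (for instance by analysing how the draw at time $t$ interacts with the future freeze ordering), rather than deriving it from a permutation symmetry of the buckets, which does not hold.
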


\begin{proof}
	We first show that, during any fragment $i$, each non-frozen bucket experiences the same number of evictions in expectation.
	To this end, we label the successive pseudo-requests in fragment~$F_{i}$ with integer numbers $t = 1,2,3,\ldots$.
	We define the following random variables: for any~$t$, let $\mathcal{E}(t)$ be $1$ if the fragment $F_{i}$ has a $t$-th pseudo-request and if that $t$-th pseudo-request leads to an eviction, and $0$ otherwise.
	We also define random variables $X_{i}^{b}(t)$ which equals $1$ if an eviction occurs in the $b$-th bucket when pseudo-request $t$ arrives, and $0$ otherwise.
	(Notice that~$X_{i}^{b}(t)$ equals~$0$ if the fragment $F_{i}$ does not have a $t$-th pseudo-request.)

	Then it holds that
	\begin{align*}
		\mathbb{E}[X_{i}^{b}(t)] & =  \mathbb{E}[X_{i}^{b}(t) | \mathcal{E}(t) = 1] \cdot \mathbb{P}[\mathcal{E}(t) = 1] + \mathbb{E}[X_{i}^{b}(t) | \mathcal{E}(t) = 0] \cdot \mathbb{P}[\mathcal{E}(t) = 0]\\
					 & =  \frac{1}{m-i} \cdot \mathbb{P}[\mathcal{E}(t) = 1] + 0
	\end{align*}
	Hence, the expected value of $X_{i}^{b}(t)$ does not depend on the index~$b$.
	Therefore, for all $b, b^{\prime} \in [i+1, m]$, it holds that
		$\mathbb{E}[X_{i}^{b}(t)] = \mathbb{E}[X_{i}^{b^{\prime}}(t)]$.

	Notice that for all $b \in [i+1, m]$ it holds $X_{i}^{b} = \sum_{t} X_{i}^{b}(t)$.
	Hence, each non-frozen bucket indeed in expectation experiences the same number of evictions.
	More formally, for all $b, b^{\prime} \in [i+1, m]$, we have $
		\mathbb{E}[X_{i}^{b}] = \mathbb{E}[X_{i}^{b^{\prime}}]$.

	Finally, we prove the claim of the lemma.
	We have that $X_{i}^{b} = 0$ for all $b \in [1, i]$ since the buckets it refers to are frozen when fragment $F_{i}$ begins.
	Hence, it holds
		$\mathbb{E}[X_{i}] = \sum\limits_{b=i+1}^{m} \mathbb{E}[X_{i}^{b}]$,
	and hence for all $b \in [i+1, m]$ we have
		$\mathbb{E}[X_{i}^{b}] = \frac{\mathbb{E}[X_{i}]}{m-i}$.
\end{proof}

Let $x \in \mathcal{U}$ be an item that was fetched during phase $P$.
We say that this fetch is \emph{clean} if $x$ was not in the cache at the beginning of $P$, otherwise it is \emph{stale}.
Let $C$ be the number of clean fetches during~$P$ and $S$ be the number of stale fetches during $P$.
Note that the total cost during phase $P$ is then $C+S$.
Let $i \in [0, m-1]$, $C_{i}$ and $S_{i}$ are respectively the number of clean and stale fetches during fragment $F_{i}$.
Let $b \in [1, m]$, $S_{i}^{b}$ is the number of stale fetches during fragment $i$ in the $b$-th bucket to freeze --- we say that a stale fetch is \emph{in} a given bucket if the fetched item was previously evicted because that bucket was randomly chosen to perform the eviction.
Finally, $S^{b} = \sum_{i=0}^{m-1} S_{i}^{b}$ is the total number of stale fetches in the $b$-th bucket to freeze throughout the phase $P$.

\begin{lemma}
	\label{lem:ub_stale_fetches_in_one_bucket}
	Let $b \in [1, m]$, it holds that
        \[
        \mathbb{E}[S^{b}] \le \sum\limits_{i=0}^{b-1} \frac{\mathbb{E}[C_{i} + S_{i}]}{m-i}
        \]
\end{lemma}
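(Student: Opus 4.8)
The plan is to bound the stale fetches charged to a single bucket by relating them to the evictions performed by that bucket, and then to relate the evictions to the overall cost of each fragment via Lemma \ref{lem:basic_upper_bound_without_bypassing}. First I would observe the key structural fact: a stale fetch is attributed to bucket $b$ precisely when the fetched item was previously evicted because bucket $b$ was the one chosen at random for that eviction. Hence the number of stale fetches in bucket $b$ throughout the phase can be at most the number of evictions that bucket $b$ ever performed — that is, $S^{b} \le \sum_{i=0}^{m-1} X_{i}^{b}$, and in fact, since bucket $b$ only performs evictions while it is still active, only the fragments $F_{0},\dots,F_{b-1}$ contribute (bucket $b$ is the $b$-th to freeze, so it is frozen once $F_{b}$ begins). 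Therefore $S^{b} \le \sum_{i=0}^{b-1} X_{i}^{b}$, and by linearity $\mathbb{E}[S^{b}] \le \sum_{i=0}^{b-1} \mathbb{E}[X_{i}^{b}]$.

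Next I would apply Lemma \ref{lem:basic_upper_bound_without_bypassing}, which gives $\mathbb{E}[X_{i}^{b}] = \mathbb{E}[X_{i}]/(m-i)$ for $b \in [i+1,m]$; this applies to every term in the sum above since $b \ge i+1$ there. So $\mathbb{E}[S^{b}] \le \sum_{i=0}^{b-1} \mathbb{E}[X_{i}]/(m-i)$. It then remains to bound $\mathbb{E}[X_i]$, the total number of evictions during fragment $F_i$, by $\mathbb{E}[C_i + S_i]$, the total number of fetches during $F_i$. This last step I would argue by a per-fragment accounting: every eviction in $F_i$ is immediately followed by a fetch (the algorithm evicts only to make room for the item it is about to fetch), so the number of evictions in a fragment is at most the number of fetches in that fragment, and the fetches are partitioned into clean and stale ones, giving $X_i \le C_i + S_i$ deterministically, hence in expectation. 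Combining the two displays yields the claimed bound.

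The main obstacle I expect is making the "stale fetch in bucket $b$ $\Rightarrow$ eviction by bucket $b$" correspondence genuinely injective rather than merely a loose inequality — one must check that an item evicted once by bucket $b$ and then re-fetched (a stale fetch in $b$) cannot be double-counted, and that an item evicted by bucket $b$ and never re-fetched during the phase simply contributes nothing, so the map from stale-fetches-in-$b$ to evictions-by-$b$ is well-defined and at most one-to-one. A secondary subtlety is the edge case where a fragment boundary or a new phase falls in the middle of serving a request: I would note that such events only remove buckets from the pool or reset it, and in either case the eviction/fetch pairing within a fragment is preserved, so $X_i \le C_i + S_i$ still holds fragment-by-fragment. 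With these points pinned down the inequality $\mathbb{E}[S^{b}] \le \sum_{i=0}^{b-1} \frac{\mathbb{E}[C_i + S_i]}{m-i}$ follows.
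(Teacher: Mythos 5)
Your proposal is correct and follows essentially the same route as the paper's proof: bound $S^b$ by the evictions $\sum_{i=0}^{b-1} X_i^b$ charged to bucket $b$ (using that bucket $b$ is frozen from $F_b$ onward), apply Lemma~\ref{lem:basic_upper_bound_without_bypassing}, and then use the fragment-wise inequality $X_i \le C_i + S_i$ from the eviction-then-fetch pairing. The extra care you note about injectivity and mid-request boundary events is consistent with, and slightly more explicit than, the paper's argument.
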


\begin{proof}
	For a stale fetch to occur, the fetched item in question must have been evicted previously in the phase.
	Hence, for a given bucket and across the whole phase, the number of stale fetches is no greater than the number of evictions; more formally, $\forall b \in [1, m]$ it holds
	\begin{align*}
		S^{b} \le \sum\limits_{i=0}^{m-1} X_{i}^{b}
	\end{align*}
	Past fragment $F_{b-1}$, bucket $b$ becomes frozen and will experience no more evictions for the rest of the phase; it therefore holds that $\forall i \in [b, m-1], X_{i}^{b} = 0$.
	Using this remark and Lemma~\ref{lem:basic_upper_bound_without_bypassing}, we get
	\begin{align}
		\label{subclaim:1}
		\mathbb{E}[S^{b}] \le \sum\limits_{i=0}^{b-1} \mathbb{E}[X_{i}^{b}] = \sum\limits_{i=0}^{b-1} \frac{\mathbb{E}[X_{i}]}{m-i}
	\end{align}

	\pagebreak

	Moreover, after any eviction directly occurs a fetch: during any period of time, the number of evictions therefore is no larger than the number of fetches.
	Taking fragments as periods of time and noticing that any fetch is either clean or stale, it therefore holds that $\forall i \in [0, m-1],
		X_{i} \le C_{i} + S_{i}$.
	From this inequality we directly derive
	\begin{align}
		\label{subclaim:2}
		\sum\limits_{i=0}^{b-1} \frac{\mathbb{E}[X_{i}]}{m-i} \le \sum\limits_{i=0}^{b-1} \frac{\mathbb{E}[C_{i} + S_{i}]}{m-i},
	\end{align}
	which proves the claim combined with (\ref{subclaim:1}).
\end{proof}

\begin{lemma}
	\label{lem:main_ub}
	For any $b \in [1, m]$ it holds that
		$$\mathbb{E}[S^{b}] \leq \frac{\mathbb{E}[C]}{m-b+1}$$
\end{lemma}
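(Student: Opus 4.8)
The plan is to strengthen Lemma~\ref{lem:ub_stale_fetches_in_one_bucket} by eliminating the dependence on the stale fetches $S_i$ on the right-hand side, trading them for clean fetches. The key structural fact I want to exploit is that a stale fetch of an item $x$ must be preceded (within the same phase) by an eviction of $x$, and that eviction happened because some bucket $B$ was chosen; moreover $x\in T(\text{maximum item of }B)$, so $x$ was genuinely a member of that bucket. Intuitively, each eviction ``consumes'' one unit of capacity from the bucket that was chosen, and the only way a bucket refills is through a later fetch. I would try to set up a charging/potential argument: the total number of evictions in bucket $b$ across fragments $F_0,\dots,F_{b-1}$ is bounded by the number of distinct items ever present in bucket $b$, which in turn is bounded by the number of clean fetches that ``entered'' bucket $b$ plus the number of items bucket $b$ started with.

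First I would write, using Lemma~\ref{lem:basic_upper_bound_without_bypassing} and the telescoping in Lemma~\ref{lem:ub_stale_fetches_in_one_bucket}'s proof, the bound $\mathbb{E}[S^b]\le\sum_{i=0}^{b-1}\mathbb{E}[X_i]/(m-i)$. Then the real work is to relate $\sum_{i=0}^{b-1}\mathbb{E}[X_i]/(m-i)$ to $\mathbb{E}[C]/(m-b+1)$. The natural inductive claim to prove (by induction on $b$, or directly) is something like: the expected number of evictions attributable to bucket $b$ over its whole lifetime is at most $\mathbb{E}[C_0+\dots+C_{b-1}]$ divided by the appropriate counting factor, because every eviction in bucket $b$ during fragment $F_i$ must be ``paid for'' either by an item bucket $b$ held at the start of $F_i$ or by a clean fetch into bucket $b$ during $F_i$, and items that were in bucket $b$ at the start of $F_i$ were either there at the start of the phase or arrived as clean fetches in earlier fragments $F_0,\dots,F_{i-1}$. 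Crucially, stale fetches do \emph{not} add items to any bucket that persists — a stale fetch re-adds an item that was previously evicted, but I need to argue that the item does not re-enter the \emph{same surviving} bucket, or if it does, it does not create a fresh eviction opportunity beyond what is already counted. This is the delicate point.

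The main obstacle I anticipate is precisely that bookkeeping: making rigorous the claim that the ``mass'' available for evictions in the surviving buckets is bounded by the clean fetches alone. One clean way to organize it: define, for each fragment $F_i$ and each bucket $b\ge i+1$, a bound $X_i^b\le (\text{size of bucket }b\text{ at start of }F_i) + (\text{clean fetches into bucket }b\text{ during }F_i)$ — actually $\le$ size at start plus new arrivals minus size at end, but evictions strictly shrink a bucket, so $X_i^b$ is at most the net influx. Summing over fragments telescopes the ``size at start/end'' terms (since the bucket's size at the end of $F_{i-1}$ equals its size at the start of $F_i$), leaving $\sum_i X_i^b \le (\text{initial bucket size}) + (\text{total clean fetches into bucket }b)$. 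Then I would sum over all $m$ buckets, note $\sum_b (\text{initial size}) \le k \le$ something controllable and $\sum_b(\text{clean into }b)\le C$, and combine with the symmetry from Lemma~\ref{lem:basic_upper_bound_without_bypassing} which says each surviving bucket in fragment $F_i$ gets an equal share $\mathbb{E}[X_i]/(m-i)$ of the evictions. Matching the harmonic-style factor $1/(m-b+1)$ will require carefully tracking that during fragment $F_i$ exactly $m-i$ buckets survive, and that bucket $b$ survives fragments $F_0,\dots,F_{b-1}$; the arithmetic $\sum_{i=0}^{b-1}\tfrac{1}{m-i}$ versus $\tfrac{1}{m-b+1}$ should fall out of a convexity or rearrangement observation once the eviction mass is correctly apportioned. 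I would flag the handling of clean-vs-stale fetches into surviving buckets, and the edge cases where a new phase begins mid-request, as the places where the proof most needs care.
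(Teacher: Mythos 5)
There is a genuine gap, and it sits exactly where the real content of this lemma lies. Your charging scheme mis-models the algorithm: after \texttt{ResetBuckets}, buckets only ever \emph{shrink} --- a fetched item (clean or stale) is never inserted into any existing bucket, so ``clean fetches into bucket $b$'' and ``the only way a bucket refills is through a later fetch'' are not meaningful quantities for \BucketALG. Consequently your telescoping bound $\sum_i X_i^b \le (\text{initial size of bucket } b) + (\text{influx})$ degenerates to $\sum_i X_i^b \le |B_b^{\text{init}}| \le k$: a deterministic bound with no dependence on $C$ and no factor $1/(m-b+1)$, which cannot yield $\mathbb{E}[S^b]\le \mathbb{E}[C]/(m-b+1)$. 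Note also that the $1/(m-b+1)$ factor is intrinsically a statement about expectations, coming from the uniformly random bucket choice (Lemma~\ref{lem:basic_upper_bound_without_bypassing}); no per-bucket mass/charging argument of the kind you describe can produce it, because on any fixed run the evictions may all land in bucket $b$.

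After the (correct) starting point $\mathbb{E}[S^b]\le\sum_{i=0}^{b-1}\mathbb{E}[X_i]/(m-i)$ with $X_i\le C_i+S_i$ (this is Lemma~\ref{lem:ub_stale_fetches_in_one_bucket}), the entire difficulty is to eliminate the $\mathbb{E}[S_i]$ terms \emph{and} sharpen $\sum_{i=0}^{b-1}\frac{1}{m-i}$ into the single factor $\frac{1}{m-b+1}$ applied to $\mathbb{E}[C]$ alone. You defer precisely this step to ``a convexity or rearrangement observation once the eviction mass is correctly apportioned,'' which is not an argument: the $S_i$ appearing on the right-hand side are themselves generated by evictions in the surviving buckets, so the dependence is circular and must be resolved by explicit bookkeeping. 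The paper does this by an induction on $b$ proving
\begin{align*}
\sum_{i=0}^{b-1}\frac{\mathbb{E}[C_i+S_i]}{m-i}\;\le\;\frac{1}{m-b+1}\Bigl(\sum_{i=0}^{b-1}\mathbb{E}[C_i]-\sum_{j=1}^{b-1}\sum_{i=b}^{m-1}\mathbb{E}[S_i^j]\Bigr),
\end{align*}
where the cross-terms $\mathbb{E}[S_i^j]$ (stale fetches in already-frozen buckets occurring in later fragments) are exactly what absorbs the stale-fetch contribution fragment by fragment, via the decomposition $S_b=S^b+\sum_{j=1}^{b-1}S_b^j-\sum_{i=b+1}^{m-1}S_i^b$ together with Lemma~\ref{lem:basic_upper_bound_without_bypassing}. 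Your proposal contains no mechanism playing the role of these cross-terms, and the ``delicate point'' you flag (an item re-entering a surviving bucket) is moot since re-entry never happens; the real delicate point --- that stale fetches both trigger evictions and are bounded by them --- is left unaddressed, so the proof as sketched does not go through.
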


\begin{proof}
 	By induction on $b$, we prove the following inequality:
	\begin{align}
		\label{ineq:induction_claim}
		\sum\limits_{i=0}^{b-1} \frac{\mathbb{E}[C_{i} + S_{i}] }{m-i} \leq \frac{1}{m-b+1} \left( \sum\limits_{i=0}^{b-1}  \mathbb{E}[C_{i}] - \sum\limits_{j=1}^{b-1} \sum\limits_{i=b}^{m-1} \mathbb{E}[S_{i}^{j}]
 \right)
	\end{align}
	which directly proves the claim along with using Lemma~\ref{lem:ub_stale_fetches_in_one_bucket}.
	Inequality (\ref{ineq:induction_claim}) clearly holds for $b=1$ since $S_{0} = 0$.
	Let $b \in [1, m]$, we suppose that the claim holds for $b$.
        Then,

	\begin{align*}
          \sum\limits_{i=0}^{b} \frac{\mathbb{E}[C_{i} + S_{i}]}{m-i} &= \sum\limits_{i=0}^{b-1} \frac{\mathbb{E}[C_{i} + S_{i}]}{m-i} + \frac{\mathbb{E}[C_{b}]+\mathbb{E}[S_{b}]}{m-b}\\
                &= \sum\limits_{i=0}^{b-1} \frac{\mathbb{E}[C_{i} + S_{i}]}{m-i} + \frac{\mathbb{E}[C_{b}] + \mathbb{E}[S^{b}] + \sum\limits_{j=1}^{b-1} \mathbb{E}[S_{b}^{j}] - \sum\limits_{i=b+1}^{m-1} \mathbb{E}[S_{i}^{b}]}{m-b}\\
          &\leq \sum\limits_{i=0}^{b-1} \frac{\mathbb{E}[C_{i} + S_{i}]}{m-i} + \frac{\mathbb{E}[C_{b}] + \sum\limits_{i=0}^{b-1} \frac{\mathbb{E}[C_{i} + S_{i}]}{m-i} + \sum\limits_{j=1}^{b-1} \mathbb{E}[S_{b}^{j}] - \sum\limits_{i=b+1}^{m-1} \mathbb{E}[S_{i}^{b}]}{m-b}\\
          &= \frac{m-b+1}{m-b} \cdot \left(\sum\limits_{i=0}^{b-1} \frac{\mathbb{E}[C_{i} + S_{i}]}{m-i}\right) + \frac{\mathbb{E}[C_{b}] + \sum\limits_{j=1}^{b-1} \mathbb{E}[S_{b}^{j}] - \sum\limits_{i=b+1}^{m-1} \mathbb{E}[S_{i}^{b}]}{m-b}\\
          &\leq \frac{\sum\limits_{i=0}^{b-1}  \mathbb{E}[C_{i}] - \sum\limits_{j=1}^{b-1} \sum\limits_{i=b}^{m-1} \mathbb{E}[S_{i}^{j}]}{m-b} + \frac{\mathbb{E}[C_{b}] + \sum\limits_{j=1}^{b-1} \mathbb{E}[S_{b}^{j}] - \sum\limits_{i=b+1}^{m-1} \mathbb{E}[S_{i}^{b}]}{m-b}\\
		&= \frac{1}{m-b} \cdot \left( \sum\limits_{i=0}^{b}  \mathbb{E}[C_{i}] - \sum\limits_{j=1}^{b} \sum\limits_{i=b+1}^{m-1} \mathbb{E}[S_{i}^{j}] \right)
	\end{align*}
        We used the identity $S_{b} = S^{b} + \sum_{j=1}^{b-1} S_{b}^{j} - \sum_{i=b+1}^{m-1} S_{i}^{b}$ at the line 2, Lemma~\ref{lem:basic_upper_bound_without_bypassing} at the line 3 and the induction hypothesis at the line 5.
	The induction holds hence the claim is true for all $b \in [1, m]$.
\end{proof}

\begin{theorem}
	\label{thm:main_up_without_bypassing}
	In each phase, \BucketALG fetches in expectation no more than $H_{\min \{k, \ell\}} \cdot C$ items, where $C$ is the number of clean fetches in the phase, $\ell$ is the size of the maximum independent set in the transitive closure of the dependency graph and $k$ is the size of the cache.
\end{theorem}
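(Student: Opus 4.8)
The plan is to treat this theorem as the place where Lemmas~\ref{lem:basic_upper_bound_without_bypassing}--\ref{lem:main_ub} are cashed in, supplemented by a single structural observation that bounds the number of buckets~$m$ created at the start of a phase. Concretely I would proceed in three steps: (i) sum the per-bucket bound of Lemma~\ref{lem:main_ub} over all buckets to bound the expected number of stale fetches of the phase; (ii) add the clean fetches to get a bound on the expected total phase cost in terms of $\mathbb{E}[C]$ and $H_m$; (iii) show $m \le \min\{k,\ell\}$, so that $H_m \le H_{\min\{k,\ell\}}$.

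For step (i) I would first argue that the stale fetches decompose cleanly over the buckets, $S = \sum_{b=1}^{m} S^{b}$: every eviction is performed by a uniquely determined bucket (the one drawn uniformly at random at that moment), and an evicted item must be re-fetched before it can be evicted again, so every stale fetch is attributed to exactly one of the $m$ buckets. Lemma~\ref{lem:main_ub} then gives, by the substitution $b \mapsto m-b+1$ that turns the sum into a harmonic number,
\[
  \mathbb{E}[S] \;=\; \sum_{b=1}^{m} \mathbb{E}[S^{b}] \;\le\; \mathbb{E}[C] \sum_{b=1}^{m} \frac{1}{m-b+1} \;=\; H_{m}\cdot\mathbb{E}[C].
\]
For step (ii), the total cost of the phase is $C+S$; adding $\mathbb{E}[C]$ naively would lose a factor, so instead I would use the finer per-bucket estimate that the proof of Lemma~\ref{lem:main_ub} actually produces, namely $\mathbb{E}[S^{b}] \le \frac{1}{m-b+1}\sum_{i=0}^{b-1}\mathbb{E}[C_{i}]$, together with the fact that the opening fragment carries no stale fetches ($S_{0}=0$). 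Swapping the order of summation in $\sum_{b}\frac{1}{m-b+1}\sum_{i<b}\mathbb{E}[C_{i}]$ charges each $\mathbb{E}[C_i]$ a coefficient $H_{m-i}$, and the bookkeeping by which the first fragment's clean misses are absorbed into the residual harmonic term is exactly the one used in the classical random-marking analysis of Fiat et al.~\cite{FiatKLMSY91}; the outcome is $\mathbb{E}[\text{fetches during the phase}] \le H_{m}\cdot\mathbb{E}[C]$.

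For step (iii), recall that $m$ is the number of maximal cached items (cached items with no cached ancestor) at the start of the phase. On one hand $m$ is at most the number of cached items, so $m \le k$. On the other hand, no maximal cached item is a descendant of another: if $x$ and $y$ are both maximal cached and $x \in T(y)$, then $y$ is a cached ancestor of $x$, contradicting the maximality of $x$. Hence the maximal cached items are pairwise incomparable under reachability in $G$, so they are pairwise non-adjacent in the symmetric closure of the transitive closure of $G$, i.e.\ they form an independent set there; by the definition of $\ell$ in Section~\ref{ssec:notation} this gives $m \le \ell$. Combining, $m \le \min\{k,\ell\}$, and by monotonicity of the harmonic numbers $H_{m} \le H_{\min\{k,\ell\}}$, which with step~(ii) proves the theorem.

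I expect the main obstacle to be step~(ii): steps~(i) and~(iii) are short, essentially a harmonic-sum rearrangement and a small combinatorial argument. Getting the clean constant $H_m$ rather than the weaker $1+H_m$ one would obtain by merely adding $\mathbb{E}[C]$ to $\mathbb{E}[S] \le H_m\,\mathbb{E}[C]$ requires genuinely exploiting $S_0 = 0$ and the finer form of Lemma~\ref{lem:main_ub}'s bound in the way the classical analysis does, and one must check that the double-sum rearrangement is valid and that the opening fragment's clean misses really do cover the leftover harmonic contribution. The other point worth stating carefully is that $S = \sum_b S^b$ is a genuine partition of the stale fetches (no stale fetch is counted twice or missed), which underpins step~(i).
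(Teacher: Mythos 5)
Your steps (i) and (iii) are sound and match the paper: $S=\sum_{b} S^{b}$ is indeed a partition of the stale fetches, and your argument that the maximal cached items are pairwise incomparable, hence $m\le\min\{k,\ell\}$, is exactly the justification behind the paper's one-line remark that the initial number of buckets is at most both $k$ and $\ell$. The genuine gap is in step (ii), which you yourself flag as the main obstacle. The mechanism you propose does not close it: with the finer estimate $\mathbb{E}[S^{b}]\le\frac{1}{m-b+1}\sum_{i=0}^{b-1}\mathbb{E}[C_{i}]$ (which is indeed extractable from the proof of Lemma~\ref{lem:main_ub} by dropping the correction terms), swapping the order of summation gives $\mathbb{E}[S]\le\sum_{i=0}^{m-1}H_{m-i}\,\mathbb{E}[C_{i}]$, and adding the clean fetches puts a coefficient $1+H_{m-i}$ on $\mathbb{E}[C_{i}]$; in the worst case, when all clean fetches fall in fragment $F_{0}$, this is $(1+H_{m})\cdot C$, i.e.\ no better than the naive route you wanted to avoid. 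The fact $S_{0}=0$ cannot rescue this, because the stale terms have already been eliminated from your bound, and the appeal to ``the classical bookkeeping of Fiat et al.'' is left unspecified --- in the classical marking analysis the saving of the additive $1$ comes from $H_{c}\ge 1$ over the clean pages, and your formulation contains no counterpart of that quantity.

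The idea the paper actually uses is different and simpler: the last bucket to freeze experiences no stale fetches at all, i.e.\ $S^{m}=0$ (a stale fetch in it before it froze would have made it freeze earlier, and once it freezes the phase is over). Hence one sums Lemma~\ref{lem:main_ub} only over $b\in[1,m-1]$, obtaining $\mathbb{E}[S]\le\bigl(\sum_{b=1}^{m-1}\tfrac{1}{m-b+1}\bigr)\,\mathbb{E}[C]=(H_{m}-1)\cdot C$, and then adds the clean cost $C$ (noting that $C$ is determined by the phase, so $\mathbb{E}[C]=C$), which yields exactly $H_{m}\cdot C\le H_{\min\{k,\ell\}}\cdot C$. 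Without this observation, or an equivalent source of the $-1$ saving, your argument as written establishes only the weaker bound $(1+H_{\min\{k,\ell\}})\cdot C$, not the stated theorem.
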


\begin{proof}
  We first examine the number of stale fetches in the last bucket to freeze.
  Until it freezes, they are no fetches in it (otherwise it would have frozen earlier); but when it finally freezes, the phase is over and no more fetches will occur.
	Hence, there are no stale fetches in the last bucket to freeze i.e., $S^{m} = 0$.
	Summing the inequalities from Lemma \ref{lem:main_ub} for all buckets $b \in [1,m-1]$ and taking into account the last bucket's specificity gives us the following
	\begin{align*}
		\mathbb{E}[S] \le \left( \sum\limits_{b=1}^{m-1} \frac{1}{m-b+1} \right) \cdot \mathbb{E}[C].
	\end{align*}
	Adding the expected total number of clean fetches on both members, noticing that $C$ does not depend on the random choices of \BucketALG (i.e., that $\mathbb{E}[C] = C$) and that the initial number of buckets $m$ is smaller than both $k$ and $\ell$ ends the proof.
\end{proof}

\subsubsection{A Lower Bound for any Algorithm in a Phase}
\label{sec:lowerbound_in_a_phase}

Let \OPT be an optimal algorithm for the input sequence $\sigma$.
Similarly to Fiat et al.~\cite{FiatKLMSY91}, we can prove that $\OPT$ has an (amortized) cost at least $C$ in every phase.
Only one technicality prevents us from directly applying their proof; it arises when the phase ends while \BucketALG have not yet fetched all the dependencies of the next request.
Then, there is no guarantee that \OPT must have paid to fetch those items during the current phase.
Intuitively, the lower-bound however still applies since those items will be paid by \OPT during the next phase anyway.

One method to present a rigorous proof is to compare with a specific optimal offline algorithm \OPTtau that performs the same evictions and fetches as \OPT for each request, but in a specific order: for each request, all the evictions happen first and then the fetches happen following the $\tau$ order.
\OPTtau pays the same cost as $\OPT$ and it is feasible.
More importantly, \OPTtau is synchronized with \BucketALG which allows to directly apply the proof from Fiat et al.

\subsubsection{Bounding the Competitive Ratio}
The following theorem ends the analysis of our randomized algorithm \BucketALG.

\begin{theorem}
  \label{thm:final_cr_no_bypassing}
  The algorithm \BucketALG is $2 H_{\min \{k, \ell\}}$-competitive against the oblivious adversary, where $\ell$ is the size of the maximum independent set in the transitive closure of the dependency graph and $k$ is the size of the cache.
\end{theorem}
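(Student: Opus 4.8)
This is the standard reduction from a per-phase analysis to a global competitive ratio, as in Fiat et al.~\cite{FiatKLMSY91}: the plan is to combine the per-phase upper bound of Theorem~\ref{thm:main_up_without_bypassing} with the per-phase lower bound of Section~\ref{sec:lowerbound_in_a_phase}, summed over all phases. I would fix an input $\sigma$ and a DAG $G$, pass to the pseudo-request sequence $\sigma_\tau$ and its decomposition into the phases $P_1,\dots,P_N$ of \BucketALG, and write $C_j$ for the number of clean fetches in $P_j$ and $C=\sum_{j=1}^{N} C_j$ for their total, with $\mathsf{ALG}$ denoting the (random) total cost of \BucketALG. Since the total cost of \BucketALG is the sum of its per-phase costs, Theorem~\ref{thm:main_up_without_bypassing} and linearity of expectation give
\[
  \mathbb{E}[\mathsf{ALG}]\;\le\;H_{\min\{k,\ell\}}\cdot\mathbb{E}[C].
\]
When the phase boundaries themselves depend on the algorithm's coins, I would apply Theorem~\ref{thm:main_up_without_bypassing} conditioned on the history up to the start of each phase and then take the tower rule; inside a fixed phase the clean-fetch count is deterministic, as already used in the proof of that theorem.

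Next I would establish the offline lower bound $C\le 2\,\mathsf{OPT}$ for every realization of the coins, where $\mathsf{OPT}$ is the optimal offline cost on $\sigma$. This is the content of Section~\ref{sec:lowerbound_in_a_phase}: following Fiat et al., one compares against the synchronized optimum \OPTtau, which is feasible, costs exactly $\mathsf{OPT}$, and serves the pseudo-requests in the same $\tau$-order as \BucketALG, so that its configuration is aligned with \BucketALG's at the phase boundaries. For $j\ge 2$ one argues that \OPTtau pays at least $C_j$ over the window $P_{j-1}\cup P_j$ --- jointly these two blocks force at least $k+C_j$ distinct items to be resident, while \OPTtau holds only $k$ at the start of $P_{j-1}$ --- and that \OPTtau pays at least $C_1$ during $P_1$ while populating its cache. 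Each phase lies in at most two of these windows, so summing yields $C=\sum_j C_j\le 2\,\mathsf{OPT}$, hence $\mathbb{E}[C]\le 2\,\mathsf{OPT}$, and therefore
\[
  \mathbb{E}[\mathsf{ALG}]\;\le\;H_{\min\{k,\ell\}}\cdot\mathbb{E}[C]\;\le\;2H_{\min\{k,\ell\}}\cdot\mathsf{OPT},
\]
which is the claim; the factor $2$ comes entirely from the two-phase amortization of \OPT.

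I expect the only real difficulty to be in the offline lower bound under dependencies, and specifically in the two edge cases flagged in Section~\ref{sec:lowerbound_in_a_phase}. First, a phase may end while \BucketALG is still fetching the dependencies of the current request, so the ``resident set at a phase boundary'' driving the counting argument must be taken to be the intermediate configuration of \OPTtau (after its $\tau$-ordered evictions, before its remaining fetches) rather than that of \OPT; checking that this reordered algorithm \OPTtau is still feasible and still costs $\mathsf{OPT}$ is what keeps the argument honest. Second, the dependencies straddling a phase boundary must be accounted for so that every clean fetch is charged to exactly one phase and no \OPT fetch is charged by more than two phases. Making the informal inequality ``$P_{j-1}\cup P_j$ forces at least $k+C_j$ distinct residents'' precise --- where requesting $v$ forces all of $T(v)$ to be present simultaneously and \BucketALG's phases are delimited by bucket regenerations rather than by a distinct-item count --- is the delicate step; once it is in place, the summation and the combination with Theorem~\ref{thm:main_up_without_bypassing} are routine.
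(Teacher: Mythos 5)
Your top-level assembly is exactly the paper's: combine the per-phase expected upper bound of Theorem~\ref{thm:main_up_without_bypassing} with the per-phase amortized lower bound on the optimum from Section~\ref{sec:lowerbound_in_a_phase} (via the synchronized algorithm \OPTtau), with the factor $2$ coming from the amortization; your handling of random phase boundaries by conditioning is also consistent with what the paper implicitly does. At the level at which the paper itself proves Theorem~\ref{thm:final_cr_no_bypassing} (a direct combination of the two bounds), your plan is the intended one.

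The gap is in the specific justification you sketch for $C \le 2\cdot\text{\OPT}$. Your window argument asserts that $P_{j-1}\cup P_j$ forces at least $k+C_j$ distinct items to be resident. That counting is imported from classic marking, where a phase by definition contains exactly $k$ distinct requests, so that the algorithm's cache at the start of phase $j$ is precisely the set of items requested during phase $j-1$. Neither fact is available for \BucketALG: its phases end when the bucket pool is exhausted, and buckets can leave the pool without their items ever being pseudo-requested --- a bucket can be emptied by random evictions, and in a general DAG it can freeze while still non-empty because, after some removals, its new maximum item has a cached ancestor \emph{outside} the bucket (e.g., a shared descendant of two buckets, or a dependency of a freshly fetched item with several parents). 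Hence a phase may contain fewer than $k$ distinct pseudo-requests, the cache of \BucketALG at a phase boundary may contain items never requested in the preceding phase, and your window then only forces $|D_{j-1}|+C_j$ distinct residents with $|D_{j-1}|$ possibly well below $k$; the conclusion ``\OPTtau pays at least $C_j$ in the window'' does not follow. (The half you can keep: clean items of phase $j$ were indeed not pseudo-requested in phase $j-1$, since a pseudo-requested item is removed from all buckets and thus cannot be evicted later in the same phase.) The paper instead gestures at the Fiat et al.\ potential-style comparison, charging each phase $\max\{C_j-d_I,\, d_F\}$ with $d_I,d_F$ the cache differences at the phase boundaries; there the $C_j-d_I$ half is immediate in this model, and the real adaptation work sits in the $d_F$ half, precisely because frozen buckets let \BucketALG end a phase holding never-requested items. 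To make your proof self-contained you should either carry out that amortized argument, or first prove that every completed phase contains the requisite $k$ distinct pseudo-requests --- the classic-paging property your counting silently assumes, which is not evident for bucket-pool phases on general DAGs.
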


\begin{proof}
  Directly from the above upper and lower-bounds (in Sections \ref{sec:lowerbound_in_a_phase} and 
\ref{sssec:ub_bucketalg}, respectively).

\end{proof}

This result is tight with the lower bound that we present in the next subsection.

\pagebreak
\subsection{A Lower Bound for Dependency-Aware Caching without Bypassing}
\label{apx:lb}

In this section we give a lower bound for the competitive ratio of any randomized online algorithm for caching with dependencies.
The competitive ratio depends on the topology of the graph, namely the maximum independent set (for the definition of the maximum independent set in directed graphs we refer to Section~\ref{ssec:notation}).
This lower bound asymptotically matches the upper bound given in Section~\ref{alg:random_mark_no_bypassing}.

\begin{theorem}
  No randomized online algorithm can achieve a competitive ratio better than $H_{\min\{k, \ell\}}$ against the oblivious adversary, where $k$ is the size of the cache and $\ell$ is the size of the maximum independent set of the dependency graph.
\end{theorem}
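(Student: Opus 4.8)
The plan is to mimic the classical randomized paging lower bound of Fiat et al.\ \cite{FiatKLMSY91}, but executed on a dependency graph whose "free" items form an independent set of size $\min\{k,\ell\}$. First I would argue that it suffices to exhibit, for every $n$, a DAG $G$ and a distribution over request sequences $\sigma$ with $\mathbb{E}[\OPT(\sigma)]$ growing without bound, such that every deterministic online algorithm has expected cost at least $(H_{\min\{k,\ell\}} - o(1))\cdot \mathbb{E}[\OPT(\sigma)]$; Yao's principle then gives the bound against the oblivious adversary. Since the theorem already notes the bound holds even for trees, I would actually build the simplest possible gadget: take $d := \min\{k,\ell\}$ and construct a DAG on $d+1$ "active" items that has an independent set of size $d$ — e.g.\ a single common sink $s$ together with $d$ sources $a_1,\dots,a_d$ each with the single edge $a_i \to s$ (a depth-two tree / star). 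The items $a_1,\dots,a_d$ are pairwise incomparable, so $\{a_1,\dots,a_d\}$ is an independent set of size $d$, and caching any $a_i$ only forces $s$ in as well. To make the cache genuinely tight I would pad with $k-1$ dummy leaves that are permanently pinned (requested so often, or made dependencies of everything, that no reasonable algorithm evicts them), leaving exactly one free slot beyond $s$; alternatively, and more cleanly, set $k$ so that the cache holds $s$ plus exactly $d-1$ of the $a_i$'s. Either way, at any moment one of $a_1,\dots,a_d$ must be outside the cache.

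Next I would describe the request distribution, which is the standard "marking" lower-bound sequence adapted to this structure. The adversary proceeds in phases; within a phase it issues $d$ requests, each chosen uniformly at random from the set of $a_i$'s not yet requested in the current phase (together with an occasional request to $s$, which is free for everyone since $s$ is always present once some $a_i$ is). The point is that when the $j$-th distinct $a_i$ of a phase is requested, an online algorithm holds at most $d-1-j+?$ — more precisely, among the $d-j+1$ items of $\{a_1,\dots,a_d\}$ not yet seen this phase, the algorithm's cache can contain at most $d-1$ of all $d$ of them, hence it is missing the freshly requested one with probability at least $1/(d-j+1)$ by a counting/symmetry argument (here I would invoke exactly the lemma structure of Fiat et al., namely that over the random choice of which unseen item is requested next, the expected number of faults in a phase is at least $\sum_{j=1}^{d} \frac{1}{d-j+1} = H_d$). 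Summing over $R$ phases, the expected online cost is at least $R\cdot H_d - O(R)$ lower-order terms from the boundary-of-phase bookkeeping. For the offline optimum, I would use the classical phase argument: $\OPT$ can serve each phase with $O(1)$ fetches on average — it evicts, at the start of a phase, the one $a_i$ that will be requested last in the upcoming phase (known offline), so it pays at most one fetch per phase, giving $\mathbb{E}[\OPT] \le R + O(1)$. Dividing, the ratio tends to $H_d = H_{\min\{k,\ell\}}$.

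The main obstacle — and the only place this differs from textbook paging — is handling the dependency-forced item $s$ and the cache-capacity bookkeeping cleanly, so that the reduction really produces a graph whose maximum independent set (in the transitive-closure/symmetric-closure sense of Section~\ref{ssec:notation}) is exactly $\ell$ and simultaneously a cache of size exactly $k$, covering both regimes $k \le \ell$ and $\ell \le k$ with a single family. When $k \le \ell$ I want $d = k$: here I can ignore most of the independent set and just use $k$ mutually incomparable leaves with no common sink at all (pure classical paging on $k+1$ items embedded in a larger DAG that has a huge independent set), so the dependency structure is vacuous on the active items and the original Fiat et al.\ bound applies verbatim. When $\ell < k$ I want $d = \ell$: now the graph genuinely has maximum independent set $\ell$, so I cannot have more than $\ell$ pairwise-incomparable active items; I would make the remaining $k - \ell$ cache slots occupied by a forced chain (or forced common descendants) that every configuration must retain, so that effectively the algorithm is doing $\ell$-paging in the free slots. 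Verifying that this padding does not accidentally enlarge the independent set, and that $\OPT$'s one-fetch-per-phase strategy stays feasible under the dependencies, is the delicate part; once that is set up, the expected-cost computation is the routine harmonic-sum calculation and Yao's principle finishes the proof.
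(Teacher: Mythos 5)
Your gadget is fine in spirit (it is close to the paper's construction, which uses $\ell-1$ isolated nodes plus a single chain of length $k-\ell+2$, i.e.\ $k+1$ items in total), but the probabilistic core of your argument does not work. The flaw is the request distribution: you issue, in each phase, exactly $d$ requests forming a uniformly random permutation of the $d$ active items, and claim that the $j$-th request is missed with probability at least $1/(d-j+1)$. That claim is false for $j\ge 2$: the algorithm always has at least one active item missing, but nothing forces the missing item to lie among the $d-j+1$ items not yet requested in the current phase. A marking-style online algorithm defeats your distribution outright: after its first fault of the phase it evicts an item that has already been requested in this phase (possible unless the fault occurred on the very first request), after which it faults no more until the phase ends. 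Its expected cost is at most $1+1/d$ per phase, while your own offline argument shows \OPT pays about $1$ per phase, so your instance certifies only a ratio of $1+o(1)$, not $H_d$ --- invoking ``the lemma structure of Fiat et al.'' cannot repair this, because their lower bound is not built on fixed-length permutation phases. The ingredient you are missing, and the one the paper actually uses, is sampling with replacement together with phases of random length: each $r(i)$ is uniform over $L\setminus\{r(i-1)\}$, interleaved with requests to all of $\mathcal{U}\setminus L$, and a phase ends only when all of $L$ has been requested. Then the online algorithm pays at least $1/\ell$ in expectation per subsequence, the expected number of subsequences per phase is the cover time of $K_\ell$ (about $\ell H_\ell$, by coupon collecting), and \OPT pays at most one per phase since a phase contains only $k+1$ distinct items; the harmonic factor comes precisely from the random phase length that your fixed-length phases destroy.

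Two smaller points. First, even with the corrected distribution, your $\ell<k$ gadget lets the algorithm cache at most $\ell-1$ of the $\ell$ sources (the forced chain of $k-\ell+1$ descendants fills the rest of the cache), so requests confined to the sources yield only an $H_{\ell-1}$-type bound; the paper gets $H_\ell$ by placing one chain node inside the independent set $L$ and interleaving requests to all of $\mathcal{U}\setminus L$, so that an algorithm holding all of $L$ necessarily misses one of those. Second, for $k\le\ell$ you need $k+1$, not $k$, pairwise-independent active items to run the classical bound (the paper simply takes the edgeless graph there). These are fixable, but the distribution/phase issue above is the real gap, and repairing it essentially turns your proof into the paper's.
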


\begin{proof}
  
  Let $k$ and $\ell$ be any two integers.
  If $\ell \ge k$ then our claim can be directly derived by taking a~dependency graph without edges: in that case, it is well-known~\cite{FiatKLMSY91} that no randomized algorithm can perform better than $H_k$.
  We thus assume that $\ell < k$ in the following.
	
  We consider the following graph $G = (\mathcal{U}, E)$: $G$ has $n = k + 1$ nodes and consists of $\ell - 1$ isolated nodes and a single chain of length $k - \ell + 2$.
  Let $L$ be a maximum independent of $G$, $L$ has size $\ell$ and consists of the isolated nodes plus one node of the chain.
  
  We will prove that no randomized online algorithm has a better competitive ratio than $H_{\ell}$ for the dependency graph $G$ and cache size $k$.
  To this end, we use Yao's principle~\cite{Yao77}.
  We construct a probability distribution over input sequences as follows.
  First, issue a request to an item $r(0)$ from~$L$ uniformly at random, and follow with $k-\ell+1$ requests to all items from $\mathcal{U} \setminus L$.
  We continue by issuing requests to $r(i)$ chosen uniformly at random from $L \setminus \{ r(i-1) \}$, followed by $k-\ell+1$ requests to all items from $\mathcal{U}\setminus L$.
  
  Following Yao's principle, we consider any deterministic online algorithm on this instance.
  We lower bound its cost on a subsequence of requests $r(i)$ followed by $k-\ell+1$ requests to all items from $\mathcal{U}\setminus L$.
  Consider the algorithm's cache at the beginning of the subsequence: if the algorithm then has all the items of $L$ in its cache, it pays at least $1$ for some request from $\mathcal{U}\setminus L$.
  Otherwise, in a~case that the algorithm does not have all items of $L$ in its cache, then such a missing item will be requested with probability at least $1 / \ell$; the algorithm hence pays a cost of at least $1 / \ell$ in expectation for each subsequence.
  
  To bound the competitive ratio, we partition the request sequence into phases such that each phase contains $k+1$ requests to distinct pages.
  Any optimal offline algorithm incurs at most one cache miss per phase.
  To lower bound the cost of the algorithm, it remains to determine the expected length of the phase.
  The phase ends when all items from $L$ are requested in the phase.
  
  Consider a complete graph $K_\ell$ with $\ell$ nodes.
  The sequence of requests $r(i)$ in the phase corresponds to a random walk in $K_\ell$.
  The phase ends when the random walk visits all nodes of $K_\ell$.
  The expected number of steps to visit all nodes is at least $\ell \cdot H_\ell$.
  For each subsequence consisting of~$r(i)$ followed by requests to $\mathcal{U}\setminus L$, the algorithm pays at least $1/\ell$ and its expected cost for the phase thus is at least $H_\ell$.
  The claim holds since an optimal offline algorithm pays one per phase.
\end{proof}

\subsection{Deterministic Algorithm Recursive~LRU}
\label{sec:det}

To finish the section on the model variant without bypassing, we present a simple optimal deterministic algorithm.
This completes the picture for this model: both our randomized and deterministic results are tight.

We define a natural deterministic algorithm for online dependency-aware caching, recursive LRU, which invokes the classic LRU algorithm for all dependencies of the requested item.
Precisely, we split the algorithm's logic into two loops.
The first loop updates the timestamps top-to-bottom to assure that the evictions will take place in the correct order.
The second loop fetches dependencies bottom-to-top, to assure that at each intermediate cache state, all items have their dependencies in the cache, consistently with the model definition.

We claim that this algorithm is $k$-competitive.
The lower bounds for the classic online caching imply lower bounds for the more general dependency-aware caching problem, hence the lower bound of $k$ given by Sleator and Tarjan~\cite{Sleator1985} implies that \DET is optimal.

\begin{theorem}
	\DET is $k$-competitive.
	\label{thm:recursivelru}
\end{theorem}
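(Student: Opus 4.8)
The plan is to adapt the classic potential/phase argument for LRU to the dependency-aware setting, relying on the two-loop structure of \DET to control feasibility and eviction order. First I would partition the request sequence $\sigma$ into maximal \emph{phases} in the standard way: a phase is a maximal block of requests that touches at most $k$ distinct items (equivalently, $k$-phases à la Sleator--Tarjan), starting a new phase as soon as a $(k+1)$-st distinct item would be requested. As in the classic analysis, any feasible offline algorithm — in particular \OPT — must incur at least one fetch per phase (except possibly the first), because across the boundary of two consecutive phases at least $k+1$ distinct items are requested, so \OPT cannot have all of them cached simultaneously. This gives $\mathrm{OPT}(\sigma) \ge (\#\text{phases}) - 1$.

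Next I would bound the cost of \DET within a single phase by $k$. The key structural observation is that \DET, like LRU, never evicts an item that was requested within the current phase: a \emph{fetch} is triggered only when the cache is full and the requested item (or one of its dependencies) is absent, and the item chosen for eviction is the least-recently-used one among those whose eviction keeps the cache feasible. I would argue that within a phase, at most $k$ distinct items are requested, so at most $k$ distinct items can be fetched; and no item is fetched twice in the same phase, since once fetched inside the phase it becomes among the most-recently-used and is therefore never the LRU candidate for eviction before the phase ends. The point requiring care — and I expect this to be the main obstacle — is the interaction between the dependency constraint and the LRU eviction rule: one must check that the ``first loop updates the timestamps top-to-bottom'' mechanism genuinely forces evictions to happen in an order consistent with $\tau$ (ancestors before descendants are refreshed, so descendants are evicted first), so that at no intermediate step is \DET forced to evict an item that is a dependency of some other cached item still needed, and in particular never forced to evict an item requested in the current phase. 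I would formalize this by showing that after the first (refresh) loop, every item in $T(v)$ already cached has a strictly more recent timestamp than any item not in $T(v)$ that was cached before the current request but not requested this phase — so the pool of LRU-eviction candidates during the second (fetch) loop consists only of items not requested in the current phase, and there are enough of them because the assumption that every item has at most $k-1$ dependencies guarantees $|T(v)| \le k$ and hence the cache can always be made feasible.

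Finally, I would combine the two bounds: $\mathrm{DET}(\sigma) \le k \cdot (\#\text{phases})$ and $\mathrm{OPT}(\sigma) \ge (\#\text{phases}) - 1$, which yields $\mathrm{DET}(\sigma) \le k \cdot \mathrm{OPT}(\sigma) + O(k)$ and hence $k$-competitiveness (the additive $O(k)$ term, coming from the first phase, is absorbed into the standard definition of competitiveness). The lower bound of $k$ is inherited from classic caching as already noted in the text, so this establishes tightness. The only genuinely new ingredient beyond the textbook LRU proof is the feasibility/ordering argument of the previous paragraph; everything else is a direct transcription of the Sleator--Tarjan phase analysis.
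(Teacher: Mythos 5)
Your proof has a genuine gap, and it sits exactly at the step you flagged as ``the point requiring care.'' With phases defined by at most $k$ distinct \emph{requested} items, the per-phase bound on \DET fails for two reasons. First, \DET does not only fetch requested items: serving a request $v$ fetches all missing items of $T(v)$, so a phase with few distinct requests can force many more than $k$ fetches even if nothing is ever re-fetched; ``at most $k$ distinct items are requested, so at most $k$ distinct items can be fetched'' is simply not true in this model. Second, the claim that no item is fetched twice in a phase is false, because the pool of eviction candidates is determined by feasibility (only maximal cached items, i.e.\ those without a cached ancestor, may be evicted), not by recency; timestamps only break ties \emph{within} that pool, so \DET can be forced to evict an item it fetched moments ago. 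Concretely, take a chain $c_1,\dots,c_k$ where $c_k$ depends on $c_1,\dots,c_{k-1}$, plus one isolated item $z$, and the request sequence $c_k,z,c_k,z,\dots$. Only two distinct items are ever requested, so your partition puts the entire sequence into a single phase; yet when $z$ arrives the only evictable item is $c_k$, and when $c_k$ arrives the refresh loop makes $z$ the least recently used evictable item, so \DET pays one fetch per request forever. Both ``\DET pays at most $k$ per phase'' and ``no re-fetch within a phase'' fail, and comparing against ``\OPT pays at least $1$ per phase'' then yields no bound at all (here \OPT also pays per request, but your accounting cannot see that).

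The paper avoids this by not using distinct-item phases at all. It works on the pseudo-request sequence $\sigma_{\tau}$ (each request $v$ expanded into $T(v)$ in increasing $\tau$), partitions it so that \DET makes exactly $k$ fetches per phase, counted with multiplicity, and compares against a reordered optimum \OPTtau that is synchronized with this expansion; it then shows \OPTtau fetches at least once per phase, where the delicate case that \DET fetches the same item $y$ twice inside a phase is handled by a timestamp argument: between the first fetch of $y$ and its eviction, so many distinct items must have been requested that \OPTtau cannot have held them all and must fault. If you want to rescue a marking-style argument, you would at least have to define phases on the closures $T(v)$ rather than on requested items, and you would still need to prove that whenever \DET evicts, some \emph{feasible} eviction candidate untouched in the current phase exists --- precisely the feasibility issue above, since all maximal cached items may already have been touched while the untouched ones lie buried beneath them. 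That missing lemma is the substance of the theorem, and your proposal does not supply it.
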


\begin{proof}
	Let $\sigma$ be the input request sequence.
	We are going to compare the costs (i.e., number of fetches) of \DET and \OPTtau (defined in section \ref{sec:lowerbound_in_a_phase}) on $\sigma$ to prove the claim.

        For synchronization purposes between \DET and \OPTtau, we will pretend that the input sequence is not $\sigma$ but rather a more fine-grained sequence.
We call that fine-grained sequence $\sigma_{\tau}$, and construct it as follows.
Take the sequence $\sigma$, replace each item $v \in \sigma$ by its ancestors set $T(v)$ sorted by increasing $\tau$.
We refer to the items of $\sigma_{\tau}$ as \emph{pseudo-requests}, notice that both \DET and \OPTtau will correctly deal with $\sigma_{\tau}$ while performing the same actions as with $\sigma$ --- that is, performing the same fetches and evictions in the same order.

	We partition the sequence \sigtau into phases $P_{1}, P_{2}, \dots$ such that \DET fetches at most $k$ items while it serves the pseudo-requests of $P_{1}$ and exactly $k$ items while it serves the items of the other phases.
	Such a partitioning can be obtained easily.
	We start at the end of \sigtau and scan \sigtau backwards. 
	Whenever we have seen $k$ fetches made by \DET, we cut off a new phase.
	In the remainder of this proof, 
	we show that \OPTtau fetches at least one item in each phase.

	For phase $P_{1}$ there is nothing to show.
	Since \DET and \OPTtau start with the same cache, the first item that \DET fetches, the \OPTtau fetches as well.

	Consider an arbitrary phase $P_{i}$, $i \ge 2$.
	Let $x$ be the last item of the previous phase $P_{i-1}$.
	If \DET fetches $k$ distinct items that are different from $x$ while considering the items of $P_{i}$ then \OPTtau must fetch at least one item while it serves the pseudo-requests of $P_i$; this holds since \OPTtau has $x$ in its cache after it served the last pseudo-request of $P_{i-1}$ and thus cannot have all the other $k$ requested items in $P_{i}$ in its cache.

	Now suppose that $P_{i}$ does not contain $k$ distinct items that are different from $x$.
	We distinguish between two cases: either there exists an item $y$ that appears twice in $P_{i}$, or \DET fetches $x$ during~$P_{i}$.

	Let us first assume that \DET fetches an item $y$ twice in $P_{i}$, we note $t_{\textrm{fetch1}}$, $t_{\textrm{evict}}$ and $t_{\textrm{fetch2}}$ the indices of items in \sigtau that, when considered by \DET, respectively lead to the first fetch of $y$, its eviction and its second fetch.
	Let $v \in \sigma$ be the original request that lead to the first fetch of $y$ by \DET, it holds $y \in T(v)$.
	Let $S = \left\{ w \in T(v) : w <_\tau y\right\}$ be the set of items that are considered before $y$ in response to request $v$.
	At the moment $y$ is evicted, any other cached item $z$ has a more recent timestamp by definition of \DET.
	This could hold either because $z$ is in $S$ and therefore obtained a more recent timestamp than $y$ in the for loop of the line $2$, or because $z$ appears in \sigtau after the items of~$S$.
	Hence, for $y$ to be evicted, there were at least $k - |S| + 1$ different items requested between~$t_{\textrm{fetch1}}$ and $t_{\textrm{evict}}$.
	Since \OPTtau also has the items of $S$ in its cache at time $t_{\textrm{fetch1}}$, it must fetch an item.

	Finally, suppose that within $P_{i}$, \DET does not fetch twice the same item but fetches once the item $x$.
	In that case, we can apply the same reasoning as before between the last item of $P_{i-1}$ and the first occurrence of $x$ in $P_{i}$.

	For each phase, \DET pays at most $k$ while \OPTtau pays at least $1$, hence \DET is $k$-competitive.
\end{proof}

Given that the lower bound of $k$ for competitiveness of any deterministic algorithm for the classic caching problem~\cite{Sleator1985}, we conclude that \DET is an optimal deterministic algorithm for dependency-aware caching.

We note this simple design of recursively applying a known algorithm for the classic variant works for the deterministic case only, but does not lead to optimal competitiveness for the randomized algorithm.

\section{Dependency-Aware Caching with Bypassing}
\label{sec:bypassing}

In this section, we consider a model of dependency-aware caching with \emph{bypassing}, a similar setting to the classic caching with bypassing~\cite{EpsteinILN15,Irani02a}.
When a request arrives, an algorithm may choose to either (1) serve it from the cache by fetching the requested item and its missing dependencies, or (2) to bypass it for a cost of $1$.
The first option, serving a request from the cache, may be more costly if the number of the dependencies are large, however the subsequent requests to the same item are free (until the cache changes). 
The second option, bypassing a request, always cost $1$ per bypassed request, but does not require to fetch the dependencies of the requested item.

Introducing bypassing unravels new challenges to design a competitive online algorithm.
Whether it is more beneficial to fetch or to bypass a request depends on the future requests.
Hence, bypassing brings uncertainty to the online algorithm while it clearly benefits an offline algorithm.

In this section, we show how to use the randomized eviction procedure of the last section to take advantage of bypassing capabilities while providing strong worst case guaranties.
We present the algorithm \BucketALGBypass and prove it attains a~competitive ratio of $6 \sqrt{k \cdot H_{\min \{k, \ell\}}}$ against the oblivious adversary, where $k$ is the size of the cache, $\ell$ is the size of the maximum independent set in the transitive closure of the dependency graph and $H_k$ is the $k$-th Harmonic number.

\subsection{The Randomized Algorithm Bucketing with Bypassing}

\subsubsection{Presentation of the algorithm \BucketALGBypass}

We present our randomized algorithm \BucketALGBypass that handles the case when bypassing is allowed.
Just like \BucketALG, \BucketALGBypass operates in phases.
The first phase starts with the first request $\sigma_{1}$, and the phase ends when the algorithm resets its buckets.
At the start of a phase, \BucketALGBypass initializes a pool of buckets exactly like \BucketALG.

Whenever a request arrives, say to an item $v$, we distinguish between multiple cases.
If $v$ is already in the cache, \BucketALGBypass does the same as the algorithm \BucketALG without bypassing: it removes $T(v)$ from all buckets and serves the request.
Otherwise, in case $v$ is not in the cache, \BucketALGBypass selects a item $w$ successor of $v$ positioned just above the cache frontier; it chooses the one item with minimum $\tau$ to avoid ambiguity.
\BucketALGBypass then lists the items that are both successor of $w$ and in a bucket --- let $S$ be the set of those items.
If $S$ contains strictly more than $\sqrt{k / H_{\min \{k, \ell\}}}$ items then \BucketALGBypass removes the $\sqrt{k / H_{\min \{k, \ell\}}}$ ones with minimum $\tau$ from the buckets.
It bypasses $v$ and waits for the next request.
Otherwise, if $S$ contains fewer than $\sqrt{k / H_{\min \{k, \ell\}}}$ items then \BucketALGBypass removes those items from the buckets, applies the eviction procedure of \BucketALG if the cache is full and finally fetches $w$ into the cache.
If $w = v$, then \BucketALGBypass serves the request to $v$, else it bypasses it.
The algorithm uses the procedures $\text{ResetBuckets}$ and $\text{EvictAndFetch}$ defined in Algorithm \ref{alg:random_mark_no_bypassing}.
The pseudocode of the procedure \BucketALGBypass can be found in Algorithm \ref{alg:random_mark_with_bypassing}.

\begin{algorithm}
 \caption{\BucketALGBypass algorithm with bypassing}
 \label{alg:random_mark_with_bypassing}

\Alg{{\BucketALGBypass}}{
$\Gamma \gets \text{ResetBuckets}()$\\

\When{a new request $v$ arrives}
{
  \uIf{$v$ is in the cache}{
    Remove the items of $T(v)$ from all buckets\\
    remove all frozen buckets from $\Gamma$\\
    serve $v$\\
  }
  \uElse{
    Let $w$ be the uncached item of $T(v)$ with minimum $\tau$\\
    Let $S$ be the set of items that are simultaneously in $T(w)$ and in some bucket\\
    \uIf{$|S| > \sqrt{k / H_{\min \{k, \ell\}}}$}{
      Remove the $\sqrt{k / H_{\min \{k, \ell\}}}$ items of $S$ with minimum $\tau$ from all buckets in $\Gamma$\\
      Remove all frozen buckets from $\Gamma$\\
      bypass $v$\\
    }
    \uElse{
      Remove the items of $T(w)$ from all buckets\\
      Remove all frozen buckets from $\Gamma$\\
      
      \uIf{$\Gamma = \emptyset$}
      {
        $\Gamma \gets \text{ResetBuckets}()$\\
        Remove the items of $T(w)$ from all buckets\\
        Remove all frozen buckets from $\Gamma$\\
      }
      $\Gamma \gets \text{EvictAndFetch}(w, \Gamma)$\\
      If $w=v$ then serve request $v$ else bypass it\\
    }
  }
}
}
\end{algorithm}

\subsubsection{Upper-bound on the cost of \BucketALGBypass}
In the rest of this section, we bound the competitive ratio of \BucketALGBypass.
Like in the previous analysis without bypassing allowed, we define \emph{phases} as sequences of consecutive requests between two calls to the procedure $\text{ResetBuckets}$.

\pagebreak

\begin{restatable}{lemma}{mainUBBypassing}
  \label{lem:main_ub_bypassing}
  For each phase, \BucketALGBypass pays at most $2 H_{\min \{k, \ell\}} \cdot C + \sqrt{k \cdot H_{\min \{k, \ell\}}}$ in expectation, where $C$ is the number of clean requests during the considered phase.
\end{restatable}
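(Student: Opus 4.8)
The plan is to split \BucketALGBypass's cost in a phase into three parts: (i) the cost of fetches, (ii) the cost of bypasses that happened in the "small $S$" branch (i.e.\ a fetch occurred but $w\neq v$, so we paid an extra bypass), and (iii) the cost of bypasses in the "large $S$" branch (where $|S| > \sqrt{k/H_{\min\{k,\ell\}}}$ and we removed that many items from buckets without fetching). Parts (i)+(ii) are governed essentially by the same analysis as \BucketALG: every fetch is either clean or stale, and each iteration of the main loop causes at most one fetch and at most one associated bypass, so the total cost of (i)+(ii) is at most $2(C+S)$ where $S$ counts stale fetches. The key observation is that the eviction mechanism is literally the \texttt{EvictAndFetch} procedure of Algorithm~\ref{alg:random_mark_no_bypassing}, run on a subsequence of pseudo-requests; hence Lemmas~\ref{lem:basic_upper_bound_without_bypassing}, \ref{lem:ub_stale_fetches_in_one_bucket}, \ref{lem:main_ub} and Theorem~\ref{thm:main_up_without_bypassing} apply verbatim to bound $\mathbb{E}[S] \le (H_{\min\{k,\ell\}}-1)\cdot C$, so (i)+(ii) contribute at most $2H_{\min\{k,\ell\}}\cdot C$ in expectation. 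Wait—I should be careful: the factor in front of $C$ should come out to exactly $2H_{\min\{k,\ell\}}$, not larger, so I would track the constants so that fetches contribute $H_{\min\{k,\ell\}}\cdot C$ and the paired bypasses contribute at most another $H_{\min\{k,\ell\}}\cdot C$.

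The heart of the argument is bounding part (iii). First I would argue that whenever a bypass happens in the large-$S$ branch, we delete exactly $\sqrt{k/H_{\min\{k,\ell\}}}$ items from the buckets (counted with multiplicity across buckets, or—cleaner—from the union of buckets; I would pin down which notion makes the accounting work). Since the buckets at the start of a phase are disjoint-ish covers of at most $k$ cached items, the total number of distinct items ever sitting in a bucket during the phase is at most $k$ (buckets only shrink; new items never enter), so the number of such large-$S$ bypasses is at most $k / \sqrt{k/H_{\min\{k,\ell\}}} = \sqrt{k\cdot H_{\min\{k,\ell\}}}$. The subtlety is that an item removed from a bucket in a large-$S$ bypass could potentially be re-fetched and... no, it cannot re-enter a bucket—buckets are fixed subsets of the initial cache that only lose elements—so this counting is clean. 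I would also need to check that a single large-$S$ bypass really removes $\sqrt{k/H_{\min\{k,\ell\}}}$ items that were still in some bucket (the definition says "the $\sqrt{k/H_{\min\{k,\ell\}}}$ items of $S$ with minimum $\tau$", and $S$ is exactly the bucket-members in $T(w)$, so this is immediate), and that these removals are charged only once each.

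**The main obstacle** I anticipate is the bookkeeping around phase boundaries and the interaction between the "remove $T(v)$/$T(w)$ from buckets" steps and the freeze/reset logic: I must ensure that the items counted in part (iii) are genuinely distinct from each other and that resetting buckets mid-phase (the $\Gamma=\emptyset$ re-initialization inside the else-branch) does not invalidate the "at most $k$ bucket-items per phase" bound. I would handle this by noting that a \texttt{ResetBuckets} call is precisely what ends a phase for accounting purposes—so within one phase, buckets are generated exactly once at the start, and the inner re-initialization only occurs when the pool is already empty (meaning no large-$S$ bypass can then draw from the old, exhausted buckets). I'd then combine: total expected cost $\le 2H_{\min\{k,\ell\}}\cdot C + \sqrt{k\cdot H_{\min\{k,\ell\}}}$, which is the claimed bound. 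A lingering technical point to nail down is whether clean fetches $C$ here should also absorb the clean bypasses in the large-$S$ branch, or whether $C$ as defined ("clean requests") already counts both; I would align the definition of $C$ with the statement so that the $\sqrt{k\cdot H_{\min\{k,\ell\}}}$ additive term exactly captures the large-$S$ bypasses and nothing else leaks in.
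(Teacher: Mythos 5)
Your proposal is correct and follows essentially the same route as the paper: your part (iii) is exactly the paper's subsequence $P_b$ of large-$S$ bypasses, bounded by the same ``buckets start with at most $k$ items and only shrink'' counting to get the additive $\sqrt{k\cdot H_{\min\{k,\ell\}}}$ term, and your parts (i)+(ii) correspond to the paper's $P_f$, which it likewise reduces to the \BucketALG analysis (Theorem~\ref{thm:main_up_without_bypassing}) with fetching costs doubled to absorb the paired bypass when $w\neq v$. The technical cautions you flag (bucket shrinkage caused by $P_b$ requests, mid-phase resets ending the phase) are handled in the paper in the same spirit, e.g.\ by modeling the $P_b$ interferences as fake requests to cached items.
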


\begin{proof}
  Let $P$ be a phase.
  We partition the requests of $P$ into two subsequences $P_{f}$ and $P_{b}$ depending on how \BucketALGBypass handles them.
  A request belongs to $P_{b}$ if the algorithm handles it with the lines $11$ to $13$, it belongs to $P_{f}$ otherwise.
  We now prove the two following inequalities which directly imply the claim:
  \begin{enumerate}
	\item \BucketALGBypass pays at most $\sqrt{k \cdot H_{\min \{k, \ell\}}}$ for the requests in $P_{b}$.
	\item \BucketALGBypass pays at most $2 H_{\min \{k, \ell\}} \cdot \mathbb{E}[C]$ for the requests in $P_{f}$
  \end{enumerate}
  We first prove the subclaim $(1)$.
  At the start of the phase the buckets together contain $k$ items, this number decreases until it hits zero at the end of the phase.
  Each time a request of $P_{f}$ is issued, \BucketALGBypass removes $\sqrt{k / H_{\min \{k. l\}}}$ items from the buckets.
  Hence, $P_{b}$ contains at most $k / \sqrt{k / H_{\min \{k. l\}}} = \sqrt{k \cdot H_{\min \{k. l\}}}$ requests.
  Since \BucketALGBypass pays $1$ for each request in $P_{b}$, it pays at most $\sqrt{k \cdot H_{\min \{k. l\}}}$.

  In order to prove the subclaim (2), we construct an alternative request sequence to $P_{f}$ and run \BucketALG on it.
  We show that \BucketALG does the same changes of internal variables and pays at least half the cost of \BucketALGBypass on $P_{f}$ dealing with that alternative request sequence.\\
  Let $v$ be a request of $P_{f}$.
  If $v$ is in the cache when requested, then \BucketALGBypass behaves like \BucketALG and pays $0$.
  Otherwise, if $v$ is not in the cache then everything behaves as if $w$ was the requested item in a framework without the possibility to bypass and fetching costs where doubled.
  Finally, the interferences of the requests in $P_{b}$ (that only shrink buckets) can be modeled by inserting in $P_{f}$ fake requests of cached items.
  We then apply Theorem \ref{thm:main_up_without_bypassing} and the subclaim follows.
\end{proof}

\subsection{A Lower Bound for Any Offline Algorithm and Bounding the Ratio}

Let \OPT be an optimal offline algorithm for the request sequence $\sigma$.
Based on \OPT, we define another offline algorithm \OPTtau as follows.
Let $\sigma_{i}$ be a request.
If \OPT bypasses $\sigma_{i}$ then \OPTtau bypasses it.
Otherwise, if \OPT does not bypass $\sigma_{i}$, \OPTtau fetches and evicts the same items as \OPT but in a specific order.
First, \OPTtau performs all the evictions that \OPT does.
Then, \OPTtau performs all the fetches that \OPT does sorted in increasing $\tau$.
\OPTtau maintains a feasible cache and pays the same cost as \OPT, \OPTtau is an optimal offline algorithm.\\

Similarly to the previous section, we compare the cost of \BucketALGBypass to the cost of \OPTtau.
The following claim holds.

\begin{restatable}{lemma}{optBypassLB}
	\label{lem:OPT_Bypass_lb}
	Let $P$ be a phase, $C$ is the number of clean fetches made by \BucketALGBypass during $P$. 
	Let $d_{I}$ and $d_{F}$ be the number of items in \OPTtau's cache not in \BucketALGBypass's cache at the beginning and at the end of the phase.
	It holds:
	\begin{align*}
		\text{\OPTtau}(P) \geq \frac{1}{2} \cdot \sqrt{\frac{H_{\min \{k, \ell\}}}{k}} \cdot (C - d_{I} + d_{F}).
	\end{align*}
\end{restatable}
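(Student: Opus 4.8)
The plan is to run the phase-based lower bound of Fiat et al.\ — the template that, through the synchronized offline algorithm \OPTtau, already settled the no-bypass case in Section~\ref{sec:lowerbound_in_a_phase} — and to pay the extra factor $\theta:=\sqrt{k/H_{\min\{k,\ell\}}}$ only where bypassing forces it. Because \OPTtau performs all its evictions before its fetches and fetches in increasing $\tau$, its moves are aligned with the way \BucketALGBypass processes pseudo-requests, so the two caches may be compared after each step; let $d(t)$ be the number of items in \OPTtau's cache but not in \BucketALGBypass's cache after step $t$, so $d$ starts at $d_I$ and ends at $d_F$. The target $2\theta\cdot\text{\OPTtau}(P)\ge(C-d_I)+d_F$ will follow from two bounds, $\text{\OPTtau}(P)\ge C-d_I$ and $\text{\OPTtau}(P)\ge d_F/\theta$, summed after noting that $\theta\ge1$ and $\text{\OPTtau}(P)\ge 0$ yield $\text{\OPTtau}(P)\ge\max\{C-d_I,0\}\ge(C-d_I)/\theta$.

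For the first bound (the ``clean'' half, which costs no $\theta$) I would first observe that the $C$ clean fetches hit pairwise distinct items: a clean-fetched item is in no bucket, since the pool is regenerated only at phase starts and never enlarged, hence it is never evicted inside $P$ and cannot be fetched a second time. Fix such an item $w$, fetched while serving a request to some $v$ of which $w$ is the minimum-$\tau$ uncached element of $T(v)$. If $w$ sits in \OPTtau's cache at that instant but not at the start of $P$, then \OPTtau fetched $w$ earlier in $P$; otherwise \OPTtau lacks $w\in T(v)$, cannot hold $v$ feasibly, and so either bypasses that request or fetches $w$. These assignments are injective (distinct items give distinct fetches; distinct clean fetches occur at distinct requests, so their bypass charges differ; a fetch charge is never a bypass charge), and the only clean-fetched items left uncharged are those already in \OPTtau's initial cache, of which — being distinct and outside \BucketALGBypass's initial cache — there are at most $d_I$. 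Hence $\text{\OPTtau}(P)\ge C-d_I$.

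For the second bound (the ``end-of-phase mismatch'' half), which is where $\theta$ is spent, the obstacle is that, unlike in plain marking, \BucketALGBypass's cache at the end of $P$ is not the set of items requested during $P$ — many requested items are bypassed and dependencies are fetched one at a time — so one cannot directly conclude that the $d_F$ items of \OPTtau's final cache lying outside it were fetched and evicted by \OPTtau. Instead I would split such an item $z$ into: (i) $z$ never cached by \BucketALGBypass during $P$, in which case $z$ is either a mismatch already present at the start of $P$ (it then inflates $d_I$ and $d_F$ equally and drops out of $d_F-d_I$) or was fetched by \OPTtau inside $P$; and (ii) $z$ was in \BucketALGBypass's initial cache and got evicted during $P$ without ever being re-fetched (the only other possibility, since clean- and stale-fetched items never leave the cache). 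The items of type (ii) are the real work: \BucketALGBypass evicts only inside fetching-branch requests, and I would bound the number of these — and hence of such $z$ — by $\theta$ times the number of requests on which \OPTtau must itself pay, using the very bucket bookkeeping from the proof of Lemma~\ref{lem:main_ub_bypassing} (each bypassing-branch request removes $\theta$ bucket items, so the pool of $k$ items is used up after few such requests, each of which also forces \OPTtau to act unless it already holds the item in question). Collecting the contributions gives $d_F\le\theta\cdot\text{\OPTtau}(P)$ after the cancellations against $d_I$, i.e.\ $\text{\OPTtau}(P)\ge d_F/\theta$; adding the two bounds proves the lemma, with $d_I$ and $d_F$ in exactly the shape that telescopes when phases are later summed ($d_F$ of one phase being $d_I$ of the next).

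The step I expect to be the main obstacle is this second bound, and specifically the ``phantom'' type-(ii) items that \BucketALGBypass evicts while \OPTtau keeps them untouched, so that \OPTtau pays nothing directly for them: one has to show that such items are at most a $\theta$-fraction of \OPTtau's total phase cost by blaming them, through the bucket accounting, on bypassing-branch requests where \OPTtau is also forced to pay. A lesser nuisance is the boundary case where $P$ ends in the middle of serving a request — so that some dependencies of that request have not been processed yet — which is handled exactly as in the no-bypass proof of Section~\ref{sec:lowerbound_in_a_phase}: those leftover items are paid for by \OPTtau in the next phase, which is harmless since they re-enter the bound together with that phase's $d_I$.
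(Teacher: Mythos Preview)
The paper states this lemma but supplies no proof in the provided source (the \texttt{restatable} is never followed by a proof), so there is no paper argument to compare against; I can only assess your outline on its own.

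Your first half, the bound $\text{\OPTtau}(P)\ge C-d_I$, is essentially right. The key observations---that \BucketALGBypass performs at most one fetch per request, that clean-fetched items are pairwise distinct because they leave every bucket and are therefore never evicted, and that each clean-fetched $w$ not in \OPTtau's initial cache forces \OPTtau either to fetch $w$ or to bypass the unique request at which that clean fetch occurs---are all correct, and the injectivity of the charging goes through. (Your case split is phrased a bit loosely: the ``otherwise'' clause should exclude the items already in \OPTtau's initial cache, which you then recover separately at the end of the paragraph; but the intended three-way partition is sound.)

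The second half is where the proposal does not close. You announce $\text{\OPTtau}(P)\ge d_F/\theta$, yet the sketch you write is really arguing toward something like $d_F-d_I\le\theta\cdot\text{\OPTtau}(P)$ (you explicitly say type-(i) items that sit in both $d_I$ and $d_F$ ``drop out of $d_F-d_I$''), which is a different inequality; and neither version is actually established. The entire weight rests on the type-(ii) items---those \BucketALGBypass evicts while \OPTtau keeps---and for these your plan is only ``bound their number by $\theta$ times the number of requests on which \OPTtau must pay, using bucket bookkeeping''. But the bookkeeping of Lemma~\ref{lem:main_ub_bypassing} limits the number of \emph{bypassing}-branch requests to $k/\theta$; type-(ii) evictions occur on \emph{fetching}-branch requests, one per request, and nothing you wrote ties the number of those to \OPTtau's cost with only a factor-$\theta$ loss. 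Concretely, why can't \BucketALGBypass evict many items that \OPTtau holds while \OPTtau serves every request from cache and pays nothing? You yourself flag this as ``the main obstacle'', and the proposal does not overcome it; as written it is a hope rather than an argument.
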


The derived amortized lower bound on \OPTtau($P$) is proportional to the number of clean requests~$C$.
However, the upper bound on \BucketALGBypass($P$) in Theorem \ref{lem:main_ub_bypassing} features an additive term which does not depend on $C$.
As a last step before the final result, we give a second lower bound on \OPTtau($P$) in case $C$ is zero.

\begin{restatable}{lemma}{optWeakLB}
	\label{lem:OPT_Weak_Lower_Bound}
        Let $P$ be a phase.
	Let $\Phi_{I}$ and $\Phi_{F}$ be $1$ if \BucketALGBypass's and \OPTtau's caches are not the same, respectively at the beginning and at the end of the phase.
	It holds:
	\begin{align*}
		\text{\OPTtau}(P) \geq \frac{1}{2} \cdot (\mathbbm{1}_{\text{\BucketALGBypass}(P) > 0} - \Phi_{I} + \Phi_{F}).
	\end{align*}
\end{restatable}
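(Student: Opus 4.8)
The plan is to argue that whenever $\textsf{BucketingBypass}(P) > 0$, the algorithm made at least one bypass or one fetch during $P$, and to use the structural fact that the buckets drain exactly from $k$ items down to $0$ within the phase to show that $\textsf{OPT}_{\tau}$ cannot "trivially" share $\textsf{BucketingBypass}$'s behavior on the whole phase unless its cache agrees with $\textsf{BucketingBypass}$'s at both endpoints. The indicator-difference form $\mathbbm{1}_{\textsf{BucketingBypass}(P) > 0} - \Phi_{I} + \Phi_{F}$ is the telltale sign that this is an \emph{amortized} bound: I would introduce a potential that is $1$ when the two caches disagree and $0$ otherwise (exactly $\Phi$), and show that across the phase the true cost of $\textsf{OPT}_{\tau}$ plus the change in this potential is at least $\tfrac12 \mathbbm{1}_{\textsf{BucketingBypass}(P) > 0}$. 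Since $\Phi_{I}$ and $\Phi_{F}$ already appear in the statement, the work reduces to the single-phase cost accounting.

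First I would dispose of the trivial case: if $\textsf{BucketingBypass}(P) = 0$ the right-hand side is $\tfrac12(-\Phi_{I} + \Phi_{F}) \le \tfrac12$, and I would observe that a phase with zero cost means $\textsf{BucketingBypass}$ never bypassed and never fetched, i.e.\ every request of $P$ hit an already-cached item with all dependencies present; the buckets only shrank through the $|S| \le \sqrt{k/H_{\min\{k,\ell\}}}$ branch without any fetch, which forces $\textsf{BucketingBypass}$'s cache to be constant over $P$. If its cache equals $\textsf{OPT}_{\tau}$'s at the start then it still does at the end, so $\Phi_{I} = \Phi_{F}$ and the bound holds with the right side $\le 0$ handled trivially; if it differs at the start, then $\Phi_{I} = 1$ and the right side is at most $\tfrac12(\Phi_{F}) \le \tfrac12$, and I would check $\textsf{OPT}_{\tau}(P) \ge 0$ suffices unless $\Phi_{F} = 1$ and $\Phi_{I} = 0$, which the constancy of $\textsf{BucketingBypass}$'s cache rules out. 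So the only substantive case is $\textsf{BucketingBypass}(P) > 0$, where I must show $\textsf{OPT}_{\tau}(P) \ge \tfrac12(1 - \Phi_{I} + \Phi_{F})$, i.e.\ $\textsf{OPT}_{\tau}(P) \ge \tfrac12$ when $\Phi_{I} \ge \Phi_{F}$, and $\textsf{OPT}_{\tau}(P) \ge 1$ when $\Phi_{I} = 0, \Phi_{F} = 1$ — but since costs are integers, $\textsf{OPT}_{\tau}(P) \ge \tfrac12$ already means $\textsf{OPT}_{\tau}(P) \ge 1$, so it all collapses to: \emph{if $\textsf{BucketingBypass}$ incurs positive cost in $P$ and the two caches agree at the start of $P$, then $\textsf{OPT}_{\tau}$ pays at least $1$ in $P$}.

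For that core claim I would track the first moment in $P$ at which $\textsf{BucketingBypass}$ incurs cost. If it is a bypass, then at that moment $\textsf{BucketingBypass}$ chose not to fetch the requested item $v$ (or fetched a proper ancestor $w \ne v$ and bypassed); if it is a fetch of some $w \in T(v)$ not in cache, then $w$ was absent. In either case, because the caches agreed at the start of $P$ and $\textsf{OPT}_{\tau}$'s cache can only have changed by $\textsf{OPT}_{\tau}$ itself paying, either $\textsf{OPT}_{\tau}$ already paid in $P$ (done), or $\textsf{OPT}_{\tau}$'s cache still matches $\textsf{BucketingBypass}$'s at that moment, meaning $\textsf{OPT}_{\tau}$ also faces a request to an item it does not have cached with all dependencies — so $\textsf{OPT}_{\tau}$ must pay (either bypass for $1$, or fetch, which costs at least $1$) during $P$. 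I expect the main obstacle to be the bookkeeping of partial requests at phase boundaries: exactly as flagged in Section~\ref{sec:lowerbound_in_a_phase}, a phase may end while $\textsf{BucketingBypass}$ is midway through fetching the dependencies of a request, so "the caches agree at the start of $P$" must be stated in terms of the synchronized $\sigma_{\tau}$ pseudo-request granularity, and I would need to confirm that $\textsf{OPT}_{\tau}$'s enforced eviction-then-fetch-in-$\tau$-order discipline keeps it synchronized with $\textsf{BucketingBypass}$ across the cut so that the "caches agree" hypothesis propagates cleanly. Once that synchronization is pinned down, the argument is a short case analysis and the integrality of costs does the rest.
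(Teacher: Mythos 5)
There is a genuine gap in your case analysis. The lemma requires $\text{\OPTtau}(P) \geq \tfrac12(\mathbbm{1}_{\text{\BucketALGBypass}(P)>0} - \Phi_I + \Phi_F)$, and when $\Phi_I = \Phi_F = 1$ with $\text{\BucketALGBypass}(P) > 0$ the right-hand side equals $\tfrac12(1-1+1) = \tfrac12$, so you must still show that \OPTtau pays at least $1$ in such a phase. Your ``collapse'' to the single core claim (\emph{caches agree at the start and \BucketALGBypass pays $>0$ implies \OPTtau pays $\geq 1$}) silently drops this case: integrality upgrades ``$\geq \tfrac12$'' to ``$\geq 1$'' but does not dispose of the need to prove positivity of \OPTtau's cost when the caches already disagree at the start of the phase. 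Your first-deviation argument fundamentally uses the hypothesis that the two caches coincide at the beginning of $P$, so it gives nothing here; in potential-function language, your argument only produces the $-\Phi_I$ credit and never the $+\Phi_F$ charge.

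The paper closes exactly this case with a second, independent bound, $\text{\OPTtau}(P) \geq \Phi_F$: under the normalization that \OPTtau always keeps exactly $k$ items, if the two caches differ at the \emph{end} of the phase, then some items cached by \BucketALGBypass at that point were requested (or needed as dependencies) during the phase, and since \OPTtau does not hold them it must have paid at least $1$ within the phase, either by bypassing those requests or by fetching and later evicting. Combining the two bounds via $\max\{x,y\} \geq \tfrac12(x+y)$ yields the stated inequality. Your first bound (the core claim) and your handling of the zero-cost phase are essentially the paper's first half (modulo a minor imprecision: a constant \BucketALGBypass cache does not by itself force $\Phi_F = \Phi_I$, since \OPTtau may reconfigure --- but then it pays, which rescues that subcase), so what is missing is precisely an argument of the $\text{\OPTtau}(P) \geq \Phi_F$ type tied to the end-of-phase cache contents; without it the claimed inequality is not established.
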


\begin{proof}
  Without loss of generality, we assume that \OPTtau always has exactly $k$ items in its cache.
  Clearly, if both \BucketALGBypass and \OPTtau start the phase with the same cache then \OPTtau pays a~non-zero cost if \BucketALGBypass does, it hence holds $\text{\OPTtau}(P) \geq \mathbbm{1}_{\text{\BucketALGBypass}(P)>0} - \Phi_{I}$.
  Then, if \BucketALGBypass and \OPTtau end the phase with a different cache, this means that the requests that entailed the isolated cached items of \BucketALGBypass are not in \OPTtau's cache, forcing \OPTtau to pay at least $1$ to either bypass those requests or to cache the requested items and eventually evict them (which means \OPTtau fetches at least one item since it always has exactly $k$ items in its cache).
  It therefore holds that $\text{\OPTtau}(P) \geq \Phi_{F}$.
  Hence, it holds that 
  \begin{align*}
    \text{\OPTtau}(P) \geq \max \{\mathbbm{1}_{\BucketALGBypass>0} - \Phi_{I}, \Phi_{F}\} \geq \frac{1}{2} \cdot (\mathbbm{1}_{\BucketALGBypass>0} - \Phi_{I} + \Phi_{F}).
  \end{align*}
\end{proof}
\begin{theorem}

  The algorithm \BucketALGBypass is $6 \sqrt{k \cdot H_{\min \{k, \ell\}}}$-competitive against the oblivious adversary, where $\ell$ is the size of the maximum independent set in the transitive closure of the dependency graph and $k$ is the size of the cache.
\end{theorem}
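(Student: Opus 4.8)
The plan is to charge the expected cost of $\BucketALGBypass$ to $\OPTtau$ phase by phase and then sum, using Lemma~\ref{lem:main_ub_bypassing} on top and Lemmas~\ref{lem:OPT_Bypass_lb} and~\ref{lem:OPT_Weak_Lower_Bound} on the bottom. Write $H := H_{\min\{k,\ell\}}$ for brevity, and let $P_1,\dots,P_N$ be the phases of a run of $\BucketALGBypass$ on $\sigma$, with $C_{P_j}$ the number of clean (fetch‑triggering) requests in $P_j$. Summing Lemma~\ref{lem:main_ub_bypassing} over the phases gives $\mathbb{E}[\BucketALGBypass(\sigma)] \le 2H\cdot\mathbb{E}\big[\sum_j C_{P_j}\big] + \sqrt{kH}\cdot\mathbb{E}[N]$, so everything reduces to the two bounds $\mathbb{E}\big[\sum_j C_{P_j}\big] \le 2\sqrt{k/H}\cdot\OPTtau(\sigma)$ and $\mathbb{E}[N] \le 2\,\OPTtau(\sigma)$: plugging these in gives $\mathbb{E}[\BucketALGBypass(\sigma)] \le 2H\cdot 2\sqrt{k/H}\cdot\OPTtau(\sigma) + \sqrt{kH}\cdot 2\,\OPTtau(\sigma) = 4\sqrt{kH}\cdot\OPTtau(\sigma) + 2\sqrt{kH}\cdot\OPTtau(\sigma) = 6\sqrt{kH}\cdot\OPTtau(\sigma)$, which is the claim since $\OPTtau$ is optimal.

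For the first bound I would sum Lemma~\ref{lem:OPT_Bypass_lb} over all phases. The boundary quantities telescope: the count $d_F$ at the end of $P_j$ is the count $d_I$ at the start of $P_{j+1}$ (the cache is unchanged between phases, as $\text{ResetBuckets}$ does not touch the cache), so $\sum_j\big(-d_I(P_j)+d_F(P_j)\big) = d_F(P_N)-d_I(P_1)$. Since $\BucketALGBypass$ and $\OPTtau$ start from the same cache we have $d_I(P_1)=0$, and $d_F(P_N)\ge 0$ trivially, so this telescoped sum is nonnegative. Hence in every realization $\OPTtau(\sigma)=\sum_j\OPTtau(P_j) \ge \tfrac12\sqrt{H/k}\,\sum_j C_{P_j}$; since $\OPTtau(\sigma)$ is a fixed number, taking expectations yields $\mathbb{E}\big[\sum_j C_{P_j}\big] \le 2\sqrt{k/H}\cdot\OPTtau(\sigma)$.

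For the second bound I would first note the structural fact that \emph{every} phase of $\BucketALGBypass$ costs at least $1$: $\text{ResetBuckets}$ is invoked only inside the branch that immediately afterwards calls $\text{EvictAndFetch}(w,\Gamma)$ on an item $w$ that is not currently in the cache, so that branch performs a fetch, and no eviction has occurred earlier in the new phase, making $\mathbbm{1}_{\BucketALGBypass(P_j)>0}=1$ for all $j$ (the first phase is covered by the same branch). Then summing Lemma~\ref{lem:OPT_Weak_Lower_Bound}, with the analogous telescoping of $\Phi_I,\Phi_F$ (again $\Phi_I(P_1)=0$ since the two caches agree at the start, and $\Phi_F(P_N)\ge 0$), gives $\OPTtau(\sigma) \ge \tfrac12\sum_j \mathbbm{1}_{\BucketALGBypass(P_j)>0} = \tfrac12 N$, i.e.\ $\mathbb{E}[N]\le 2\,\OPTtau(\sigma)$, completing the computation above.

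The main obstacle is not a hard inequality but a matter of care: the set of phases, the $C_{P_j}$, and the boundary counts are random variables, since phase boundaries depend on $\BucketALGBypass$'s coin flips. One therefore has to justify that $\mathbb{E}[\BucketALGBypass(\sigma)]$ decomposes into the sum of per‑phase expectations to which Lemma~\ref{lem:main_ub_bypassing} applies, and that the per‑realization telescoping inequalities feeding the two lower bounds remain valid after taking expectations — which they do precisely because $\OPTtau(\sigma)$ is a fixed quantity independent of the randomness. A minor sanity check to record is that the algorithm's parameter $\sqrt{k/H}$ is at least $1$ (because $H_m\le m\le k$), so that the bucket‑removal accounting underlying Lemma~\ref{lem:main_ub_bypassing} is meaningful.
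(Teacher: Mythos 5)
Your proof is correct and is essentially the paper's own argument: it combines the upper bound of Lemma~\ref{lem:main_ub_bypassing} with the two amortized lower bounds of Lemmas~\ref{lem:OPT_Bypass_lb} and~\ref{lem:OPT_Weak_Lower_Bound}, arriving at the same $4\sqrt{k\cdot H_{\min\{k,\ell\}}}+2\sqrt{k\cdot H_{\min\{k,\ell\}}}$ split, and you merely make explicit the summation over phases and the telescoping of $d_I,d_F,\Phi_I,\Phi_F$ that the paper's per-phase ratio computation leaves implicit. The only slip is the parenthetical claim that the \emph{first} phase necessarily has positive cost --- its initial $\text{ResetBuckets}$ is not inside the fetch branch, and that phase can end at zero cost (e.g., if all of its requests are cache hits that empty the pool) --- but this only weakens the phase-count bound from $2\,\text{\OPTtau}(\sigma)$ to $2\,\text{\OPTtau}(\sigma)+1$, contributing an additive $\sqrt{k\cdot H_{\min\{k,\ell\}}}$ that is immaterial for the competitive ratio.
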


\begin{proof}
	If \BucketALGBypass pays $0$ in a phase then the claimed competitive ratio holds.
	Otherwise, we derive the claimed competitive ratio using the two amortized costs of \OPTtau found in Lemmas~\ref{lem:OPT_Bypass_lb} and \ref{lem:OPT_Weak_Lower_Bound}.
	It finally holds that
	\begin{align*}
          \frac{\text{\BucketALGBypass}(P)}{\text{\OPTtau}(P)} &\leq \frac{2 H_{m} \cdot C + \sqrt{k \cdot H_{\min \{k, \ell\}}}}{1/2 \cdot \max \{\sqrt{\frac{H_{\min \{k, \ell\}}}{k}} \cdot C, \mathbbm{1}_{\text{\BucketALGBypass}(P)>0}\}}\\
                                                                           &\leq 2 \cdot \left(2 \sqrt{k \cdot H_{\min \{k, \ell\}}} + \sqrt{k \cdot H_{\min \{k, \ell\}}} \right)\\
                                                                           &\leq 6 \sqrt{k \cdot H_{\min \{k, \ell\}}}.
	\end{align*}
\end{proof}

\section{Evaluations}

While the main objective of this paper is to develop a rigorously analyzed online algorithm with formal bounds on its performance, we additionally perform preliminary empirical evaluations of the performance of our \BucketALGBypass algorithm.
The code of the evaluations is publicly available~\cite{repo}.
\pagebreak

\begin{figure*}[h]
    \centering
\subfigure[{\normalsize Zipf distribution of requests among nodes}]{
	\includegraphics[width=0.8\linewidth]{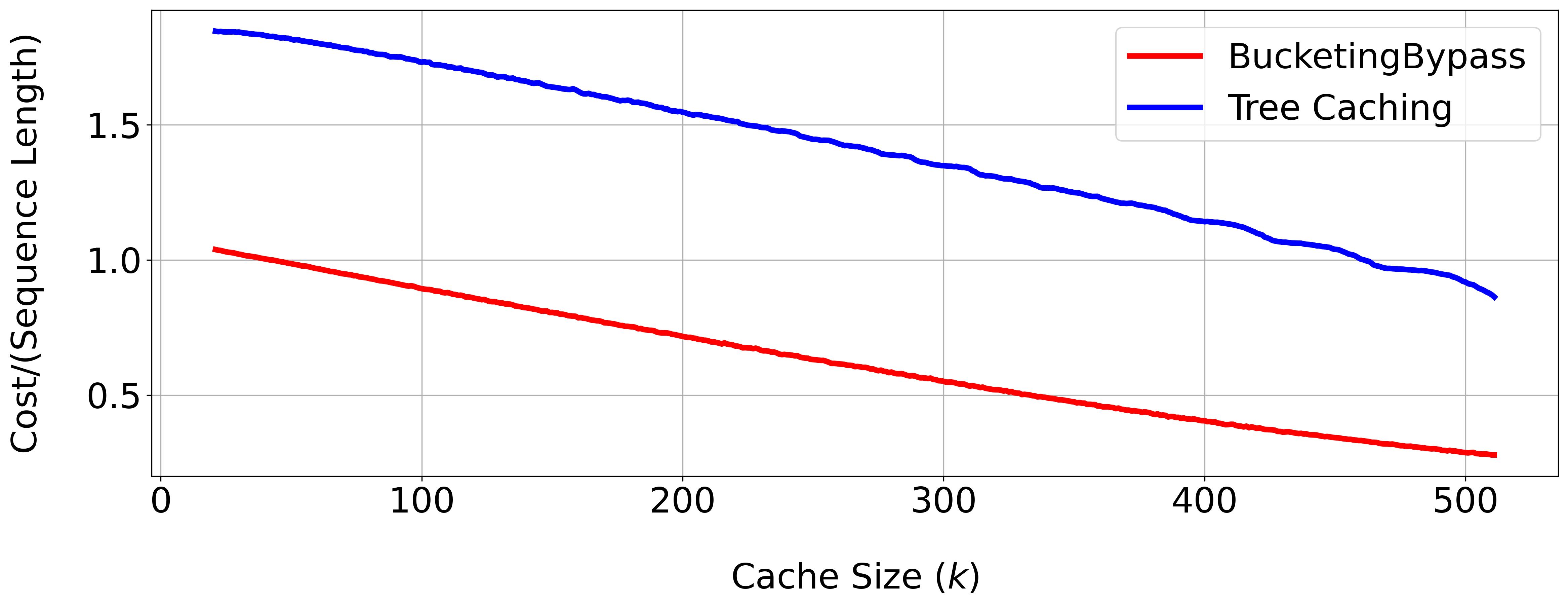}
}
    \subfigure[{\normalsize Exponential distribution of requests among nodes}]{
	\includegraphics[width=0.8\linewidth]{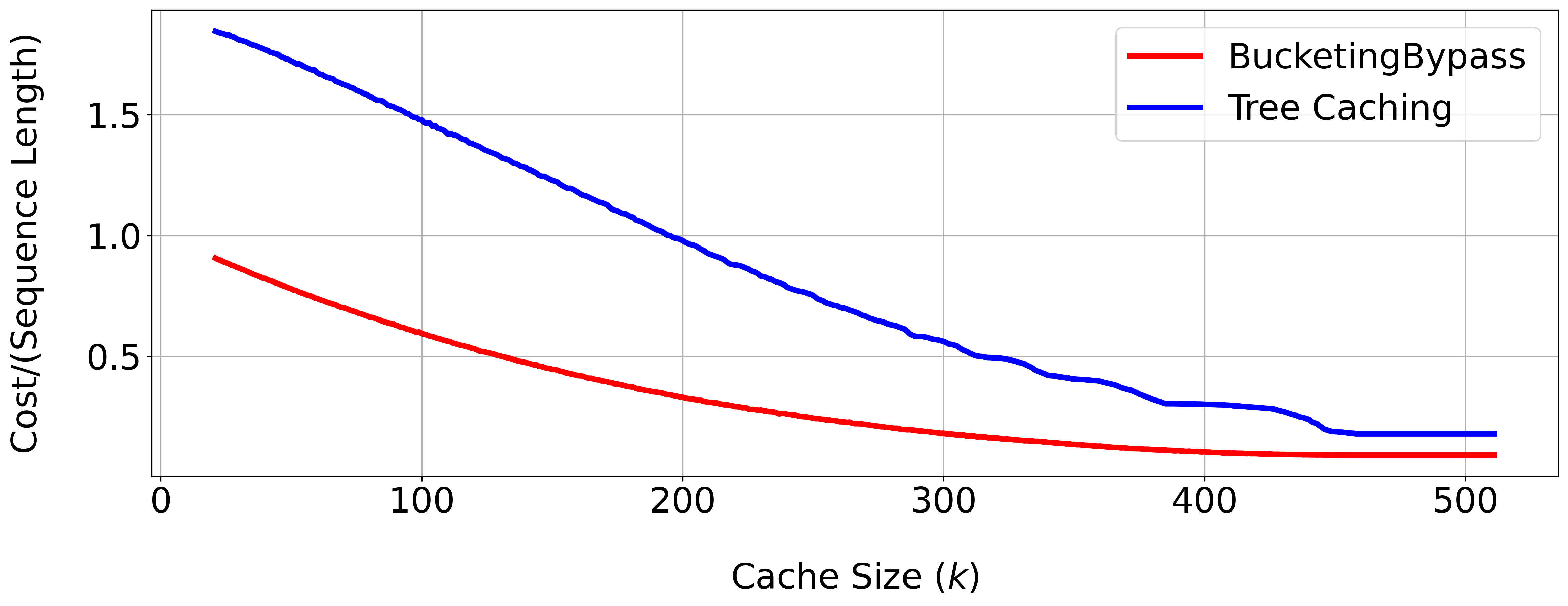}
}
    \caption{\normalsize Comparison of the \BucketALGBypass and Tree Caching algorithms in terms of cost per request for various cache sizes $k$ and a binary tree with height $h(T)=10$ and $1023$ nodes.
	The left subfigure plot uses Zipf distribution parameterized by $a=4$ and the right subfigure uses geometric distribution parameterized by $p = \frac{10}{2^{10}}$.
	}

 	\label{fig:costcachen10}
\end{figure*}

As a baseline, we compare with the algorithm TreeCaching of Bienkowski et al.~\cite{BienkowskiMPSS17}. 
Out of the two possible baselines, CacheFlow~\cite{KattaARW16} and TreeCaching~\cite{BienkowskiMPSS17}, the latter is closer in spirit to our solution, as both our system and TreeCaching optimize the same 
objective.
Notably, CacheFlow design does not account for the cost of changing the cache, but rather it may periodically exchange entire cache, whereas striking the perfect balance between the cost and the benefits of cache exchanges is a fundamental design principle of our algorithm.

Recall that our algorithm attains the competitive ratio of $6 \sqrt{k \cdot H_k}$, whereas TreeCaching attains the competitive ratio of $O(h(T)\cdot k)$, where $h(T)$ is the height of the tree. The analytical upper bounds favor our algorithm, with an improvement in terms of both parameters $h(T)$ and $k$, but the bounds concern the worst case, and in this section we compare these algorithms empirically.

To evaluate the algorithms, we use the following methodology. We construct a balanced binary tree with a height $h(T)$. To generate the requests, we conduct two experiments with different request distributions. The first experiment uses a Zipf distribution, a common distribution applied in this context~\cite{SarrarUFSH12}, and the second experiment uses an exponential distribution, modelling highly skewed request pattern.
We apply each distribution to the item heights: the height of each requested item is independently drawn from the distribution, resulting in higher probabilities for items at lower levels of the tree and lower probabilities for items at higher levels. Then, each for each level, an item from that level is chosen uniformly at random.
This method may generate requests to items that can never fit in the cache, as they have more descendants than~$k$. We prune these requests from our request sequence using rejection sampling.

For the first experiment, we use the Zipf distribution.
The probability of an item with depth $1\le i \le h(T)$ being requested, where $i=1$ corresponds to the top item of the tree, is given by
\[
	Pr(i) = \frac{1}{2^{i-1}}\frac{(h(T) - i + 1)^{-a}}{\zeta(a)},
\]
where $\zeta$ represents the Riemann Zeta function, and we use the parameter value $a = 4$.
For the second experiment, we use the geometric distribution.
Requesting an item with index $1 \le i \le 2^{h(T)} - 1$ has the following probability
\[
	Pr(i) = (1 - p)^{i-1} p,
\]
where we use $p = \frac{10}{2^{10}}$.

We generate a random sequence of requests with a fixed length $5000$ for both distributions. Then, we calculate the total cost of the \BucketALGBypass and TreeCaching algorithms on such sequence, for various cache sizes $k$. We repeat each sequence 10 times and take an average cost, and then we determine the average cost per request for both algorithms.

In Figure \ref{fig:costcachen10}, we present the results of our evaluations of the mean cost per request between the \BucketALGBypass and TreeCaching algorithms.  
The \BucketALGBypass algorithm consistently induces significantly lower costs compared to the TreeCaching algorithm across all inspected cache sizes $k$. Our algorithm incurs on average 2x lower cost per requests than TreeCaching.

\section{Conclusions and Future Directions}

This paper proposed several competitive caching algorithms that are aware of dependencies among items.
We leave interesting avenues open for future studies.
Most notably, it will be interesting to study dependency aspects in more general
variants of caching, such as weighted caching~\cite{CohenK99,Young94,BansalBN12} or file caching ~\cite{Young02,AdamaszekCER19}.

To support further research and ensure reproducibility, we made our code and evaluation artefacts available at \url{https://github.com/foo/dependency-aware-caching}.

\vspace{0.2cm}
\textbf{Acknowledgements.}
This work was supported by the Austrian Science Fund (FWF) project I~5025-N (DELTA).

\bibliographystyle{IEEEtran}
\balance
\bibliography{references} 
\balance

\end{document}